\newcommand*{\addFileDependency}[1]{
  \typeout{(#1)}
  \@addtofilelist{#1}
  \IfFileExists{#1}{}{\typeout{No file #1.}}
}
\newcommand*{\myexternaldocument}[1]{%
    \externaldocument{#1}%
    \addFileDependency{#1.tex}%
    \addFileDependency{#1.aux}%
}
\theoremstyle{plain}
\newtheorem{theorem}{Theorem}[section]
\newtheorem{proposition}{Proposition}[section]
\newtheorem{lemma}{Lemma}[section]
\newtheorem{corollary}{Corollary}[section]
\theoremstyle{remark}
\newtheorem{assumption}{Assumption}[section]
\theoremstyle{remark}
\newtheorem{remark}{Remark}[section]
\global\long\def\expect{\mathbf{E}}%
\global\long\def\prob{\mathrm{Pr}}%
\global\long\def\real{\mathbb{R}}%
\global\long\def\Op{O_{p}}%
\global\long\def\manifold{\mathcal{M}}%
\global\long\def\covarop{\mathcal{C}}%
\global\long\def\asympgt{\gtrsim}%
\global\long\def\asympeq{\asymp}%
\global\long\def\diffop{\mathrm{d}}%
\newcommandx\tangentspace[2][usedefault, addprefix=\global, 1=\manifold]{T_{#2}#1}%
\newcommandx\lpnorm[3][usedefault, addprefix=\global, 1=r, 2=]{\|#3\|_{\mathcal{L}^{#1}}^{#2}}%
\newcommandx\lp[1][usedefault, addprefix=\global, 1=p]{\mathcal{L}^{#1}}%
\global\long\def\innerprod#1#2{\langle#1,#2\rangle}%
\global\long\def\Log{\mathrm{Log}}%
\global\long\def\Exp{\mathrm{Exp}}%
\newcommandx\vfnorm[3][usedefault, addprefix=\global, 1=$\mu$, 2=]{\|#3\|_{#1}^{#2}}%
\newcommandx\vfinnerprod[2][usedefault, addprefix=\global, 1=$\mu$]{\llangle#2\rrangle_{#1}}%
\global\long\def\tdomain{\mathcal{T}}%
\newcommandx\opnorm[3][usedefault, addprefix=\global, 1=$\mu$\textbf{}, 2=]{\vertiii{#3}_{#1}^{#2}}%
\newcommandx\fronorm[2][usedefault, addprefix=\global, 1=]{|#2|_{F}^{#1}}%
\global\long\def\dtispd{\mathrm{Sym}_{\star}^+(3)}
\def\pt{\mathcal P}
\def\vpt{\mathscr P} 
\begin{document}
	
\newgeometry{textwidth=405bp,textheight=646bp,footskip=2cm}

\begin{frontmatter}
\title{Intrinsic Riemannian Functional Data Analysis for Sparse Longitudinal Observations}
\runtitle{iRFDA for Sparse Data}

\begin{aug}
\author[A]{\fnms{Lingxuan} \snm{Shao} \ead[label=e1]{shao-14@pku.edu.cn}},
\author[B]{\fnms{Zhenhua} \snm{Lin} \ead[label=e2]{linz@nus.edu.sg}}
\and
\author[C]{\fnms{Fang} \snm{Yao} \ead[label=e3]{fyao@math.pku.edu.cn}}

\renewcommand{\thefootnote}{\fnsymbol{footnote}}
\address{\small for the Alzheimer's Disease Neuroimaging Initiative\footnotemark[1]}
\address[A]{School of Mathematical Sciences, Center for Statistical Science, Peking University }
\address[B]{Department of Statistics and Data Science, National University of Singapore }
\address[C]{School of Mathematical Sciences, Center for Statistical Science Peking University}
\end{aug}

\begin{abstract}
	A new framework is developed to intrinsically analyze sparsely observed Riemannian functional data. It features four innovative components: a frame-independent covariance function, a smooth vector bundle termed \emph{covariance vector bundle}, a parallel transport and a smooth bundle metric on the covariance vector bundle. The introduced intrinsic covariance function links estimation of covariance structure to smoothing problems that involve raw covariance observations  derived from sparsely observed Riemannian functional data, while the covariance vector bundle provides a rigorous mathematical foundation for formulating such smoothing problems. The parallel transport and the bundle metric together make it possible to measure fidelity of fit to the covariance function. They also play a critical role in quantifying the quality of estimators for the covariance function. As an illustration, based on the proposed framework, we develop a local linear smoothing estimator for the covariance function, analyze its theoretical properties, and provide numerical demonstration via simulated and real datasets. The intrinsic feature of the framework makes it applicable to not only Euclidean submanifolds but also manifolds without a canonical ambient space.
\end{abstract}

\begin{keyword}[class=MSC2010]
\kwd[primary ]{62R10}
\kwd[; secondary ]{62R30}
\end{keyword}

\begin{keyword}
\kwd{diffusion tensor}
\kwd{Fr\'echet mean}
\kwd{intrinsic covariance function}
\kwd{parallel transport}
\kwd{smoothing}
\kwd{vector bundle}
\end{keyword}

\renewcommand{\thefootnote}{\fnsymbol{footnote}}
\footnotetext[1]{\tiny Data used in preparation of this article were obtained from the Alzheimer's Disease Neuroimaging Initiative (ADNI) database (\url{adni.loni.usc.edu}). As such, the investigators within the ADNI contributed to the design and implementation of ADNI and/or provided data but did not participate in analysis or writing of this report. A complete listing of ADNI investigators can be found at: \url{http://adni.loni.usc.edu/wp-content/uploads/how_to_apply/ADNI_Acknowledgement_List.pdf}}
\renewcommand{\thefootnote}{\arabic{footnote}}

\end{frontmatter}


\section{Introduction}
Functional data are nowadays commonly encountered in practice and have been extensively studied in the literature; for instance, see the monographs \cite{Ramsay2005,Ferraty2006,Hsing2015,Kokoszka2017}, as well as the survey papers \cite{Wang2016} and \cite{Aneiros2019}, for a comprehensive treatment on  functional data analysis. These classic endeavors study functional data in which functions are real- or vector-valued, and thus are challenged by data of functions  that do not take values in a vector space. Such data emerge increasingly often, partially due to the rapid development of modern technologies. For example, in the longitudinal study of diffusion tensors, as the tensor measured at a time point is represented by a $3\times 3$ symmetric positive-definite matrix (SPD), the study results in a collection of SPD-valued functions. The space of SPD matrices is not a vector space, and in particular, the usual Euclidean distance on it suffers from the ``swelling effect'' which introduces artificial and undesirable inflation of variability in data analysis \citep{Arsigny2007}. 
Specialized distance functions \citep{Pennec2006,Dryden2009} or metrics \citep{Moakher2005,Arsigny2007,Lin2019Riemannian} are required to alleviate or completely eliminate the swelling effect. These metrics turn the space of SPD matrices of a fixed dimension into a nonlinear Riemannian manifold. Data in the form of Riemannian manifold valued functions are termed Riemannian functional data and  modeled by  Riemannian random processes which are random processes taking values in Riemannian manifolds \citep{Lin2019}.

\newgeometry{textwidth=405bp,textheight=646bp}

Since the mean and covariance functions are two of the most fundamental concepts in functional data analysis, as many downstream analyses depend on them, it is of particular importance to generalize them to Riemannian functional data. For the mean function, the generalized counterpart is the well established  Fr\'echet mean function that is adopted in \cite{Dai2018,Dai2019,Lin2019} and is an extension of Fr\'echet mean. The concept of Fr\'echet mean in turn generalizes the usual mean of random vectors to manifold-valued random elements, and has been studied in depth by  \cite{Bhattacharya2003,Bhattacharya2005,Afsari2011,Schoetz2019,Pennec2019}. Related to estimation of Fr\'echet mean function is regression on manifold-valued non-functional data that was investigated by \cite{Pelletier2006, Shi2009,Steinke2010,Fletcher2013,Hinkle2014,Cornea2017}, and more broadly, on metric-space valued data by \cite{Hein2009,Faraway2014,Petersen2019,Lin2019b}, among others.

The genuine challenge comes from modeling and estimating the covariance structure. To tackle nonlinearity of the Riemannian manifold, a strategy commonly employed in the literature is to transform data from the manifold into tangent spaces via Riemannian logarithmic maps, and then to model the covariance via the transformed data. Specifically, at each time point, the associated observations are transformed into the tangent space at the Fr\'echet mean at that time point. Although tangent spaces of a manifold are linear spaces and thus provide the desired vector structure, there is one  issue to resolve: \textit{Different tangent spaces are distinct vector spaces and thus their tangent vectors are incomparable, but the covariance involves random tangent vectors from different tangent spaces.}
More specifically, the value of the covariance function at a time pair $(s,t)$  involves observations at both $s$ and $t$, and in the manifold setting, the observations at these time points are often transformed into tangent vectors of distinct tangent spaces. 


The above issue is especially pronounced for sparsely observed Riemannian functional data.  A common strategy well established in the Euclidean setting for sparse functional data is to smooth the discrete and noisy raw covariance function \citep{Yao2005a,Cai2010,Li2010,Zhang2016}. However, there are fundamental difficulties in extending this {seemingly simple} strategy to the manifold setting. First, as previously mentioned, the covariance function involves  tangent vectors from different tangent spaces, so that an appropriate definition of covariance between two incomparable random tangent vectors is in order.  Second, for the smoothing strategy to work, the underlying covariance function shall possess certain regularity of smoothness, such as continuity or differentiability. However, it is challenging to define and quantify such regularity for covariance of Riemannian functional data. This problem is unique to sparsely observed data; when data are fully observed or sufficiently dense so that each trajectory can be individually recovered, the sample covariance operator serves as an estimate for the covariance structure \citep{Lin2019}, which does not require smoothing.

To overcome the above difficulties, in this paper we develop a novel framework to model and estimate the covariance when Riemannian functional data are sparsely and noisily recorded. The proposed framework features four innovative components. 
\begin{itemize}
	\item First, an intrinsic covariance function is developed to characterize covariance between random tangent vectors from distinct tangent spaces. Such covariance function is invariant to manifold parameterization, frame selection and embedding, and is made possible by considering the covariance of two random tangent vectors in different tangent spaces as a linear operator that maps one tangent space into the other. This covariance function does not require reference to a frame and thus is fundamentally different from the covariance function of coefficients with respect to a frame in \cite{Lin2019}.
	\item Second, we construct a novel smooth vector bundle from the manifold, termed \emph{covariance vector bundle}, to provide an appropriate mathematical foundation for intrinsic  quantification of the regularity of the proposed covariance function, such as continuity, differentiability and smoothness. For example, it makes  statements like ``find a \emph{smooth} covariance function that minimizes the mean squared error'' sensible. In addition, covariance function estimation amounts to smoothing data located in a smooth vector bundle. Although there is a rich literature on smoothing  Riemannian manifold-valued data, the study on smoothing data in a vector bundle is still in its infancy.
	\item Third, a parallel transport on the covariance vector bundle is developed from the intrinsic geometry of the manifold, which also induces a covariant derivative on the bundle. The covariant derivative allows intrinsic definition of derivatives of a function taking values in the covariance vector bundle. Such derivatives are often needed when one analyzes theoretical properties of a smoothing procedure. The parallel transport also enables one to move the raw observations into a common vector space in which classic smoothing methods may apply. 
	\item Fourth, a smooth bundle metric is constructed and plays an essential role in measuring the fidelity of fit to data during estimation and quantifying the quality of an estimator. It is derived from the intrinsic geometry of the underlying Riemannian manifold and utilizes the Hilbert--Schmidt inner product of linear operators between two potentially different Hilbert spaces. Such inner product, mathematically well established \citep[e.g., Definition 2.3.3 and Proposition B.0.7 by][]{Prevot2007}, is less seen in statistics; the common one is usually for operators that map a Hilbert space into itself.	
\end{itemize}

The intrinsic covariance function and the covariance vector bundle together pave the way for intrinsically smoothing the observed raw covariance function, while the parallel transport and the bundle metric are critical for developing an estimation procedure for sparsely observed Riemannian functional data. As an illustration, we propose an estimator for the covariance function based on local linear smoothing and establish the point-wise and uniform convergence rates of the estimator under various designs, while emphasize that other smoothing techniques such as spline smoothing are also applicable. Other contributions include extending the invariance principle of \cite{Lin2019} to the sparse design and connecting holonomy theory to statistics via Lemma \ref{lem:c_tildec} that might be of independent interest.

Our work is clearly set apart from existing endeavors in the scarce literature that emerge only in recent years. \cite{Su2014} first represented each trajectory by its normalized velocity curve  and then transported the velocity vectors into a common tangent space.  \cite{Zhang2018b} specifically considered spherical trajectories and developed a data transformation geared to the spherical geometry,  while \cite{Dai2018}, \cite{Lin2019} and \cite{Dubey2020} studied trajectories on a more general manifold or metric space. All of these works assume fully observed functions and thus require no smoothing. \cite{Dai2019} proposed to smooth the sparsely observed raw covariance by embedding the manifold into a Euclidean space. This approach of using an embedding, although making adaption of classic smoothing techniques to the manifold setting straightforward, does not readily apply to manifolds without a canonical embedding. Moreover, the results and their interpretations may be tied to the chosen embedding; see Section~\ref{sec:invariance} of the supplement. In contrast, our framework does not require an embedding and thus circumvents these drawbacks, though it needs to overcome drastically elevated technical challenges.
 
We shall emphasize that, the proposed framework is not to replace explicit parameterization in practice, but to make the statistical outcomes invariant to the parameterization and/or frame adopted in computation. For instance, in Section \ref{sec:simulation} we demonstrate that the proposed method produces identical results under different parameterizations. In particular, the intrinsicality featured by our method refers to requiring no embedding, rather than no parameterization. This makes our framework immediately applicable to manifolds without a canonical embedding. We demonstrate this feature via the manifold of SPD matrices endowed with the affine-invariant metric in our simulation studies; see Section \ref{sec:simulation}.

The rest of the paper is organized as follows. In Section \ref{sec:cov-bundle}, we construct the covariance vector bundle, a parallel transport and a smooth metric on th bundle. In addition, we formulate the intrinsic concept of covariance function for Riemannian functional data. An estimator for the covariance function from sparsely observed Riemannian functional data is described in Section \ref{sec:estimation}, and its theoretical properties are given in Section \ref{sec:theory}. Simulation studies are placed in Section \ref{sec:simulation}, followed by an application to longitudinal diffusion tensors in Section \ref{sec:application}. 
All the proofs are deferred to an online Supplementary Material for space economy.

\section{Covariance vector bundle}\label{sec:cov-bundle}
\subsection{Preliminaries}\label{subsec:preliminaries}
We  briefly review concepts from Riemannian manifolds that are essential for our development at a high level, while relegate all formal definitions to the supplement and refer readers to the introductory text by \citet{Lee1997} for further exposition. 

Let $\manifold$ be a $d$-dimensional smooth manifold, roughly speaking, a space that locally resembles $\real^d$ and is endowed with a smooth structure. A smooth structure is formally described by a (maximal) smooth atlas on $\manifold$, specifically, a collection of pairs $(U_{\alpha},\phi_{\alpha})$ that are indexed by an index set $J$ and satisfy the following conditions:

\begin{itemize}
	\item Each $U_{\alpha}$ is an open subset of $\manifold$ and $\bigcup_{\alpha\in J}U_{\alpha}=\manifold$;
	\item Each $\phi_{\alpha}$ is a bijective continuous map between $U_{\alpha}$ and an open set of $\real^d$;
	\item If $U_{\alpha}\cap U_{\beta}\neq\emptyset$,
	then the {transition map} $\phi_{\alpha}\circ\phi_{\beta}^{-1}:\phi_{\beta}(U_{\alpha}\cap U_{\beta})\rightarrow\phi_{\alpha}(U_{\alpha}\cap U_{\beta})$ 
	is smooth, i.e., infinitely differentiable; we say $\phi_{\alpha}$ and $\phi_{\beta}$
	are {compatible}.
\end{itemize}
The pair $(U_{\alpha},\phi_{\alpha})$ or sometimes $\phi_{\alpha}$
itself is called a {chart} (or {coordinate map}). Intuitively, $\phi_\alpha$ assigns a local coordinate to each point in $U_\alpha$.  Two atlases are {compatible} if their union is
again an atlas (satisfying the above conditions). An atlas
is {maximal} if it contains any other atlas compatible
with it. 

Every point in a $d$-dimensional manifold is associated with a distinct $d$-dimensional vector space, called the tangent space at the point. In addition, any chart $(U_\alpha,\phi_\alpha)$ gives rises to a basis for the tangent space at each point in $U_\alpha$, and the basis smoothly varies with the point with $U_\alpha$. More generally, one can assign to each tangent space a basis. Such an assignment is called a frame.  Tangent spaces at different points of a manifold are conceptually distinct spaces, so that their elements, called tangent vectors, are incomparable; only tangent vectors from the same tangent space are comparable.

A Riemannian manifold is a smooth manifold equipped with a Riemannian metric $\innerprod{\cdot}{\cdot}$ which defines an inner 
product $\innerprod{\cdot}{\cdot}_{p}$ on the tangent space $\tangentspace p$ at each point $p\in\manifold$, with the associated norm denoted by $\|v\|_p=\sqrt{\langle v,v\rangle_p }$ for $v\in \tangentspace p$. The metric, which smoothly varies with $p$, induces a distance function $d_{\manifold}$ on $\manifold$ and turns the manifold into a metric space. A geodesic in a Riemannian manifold is a constant-speed curve of which every sufficiently small segment is the shortest path connecting the endpoints of the segment. At each point  $p\in\manifold$ there is an exponential map $\Exp_{p}$ that maps tangent vectors at $p$ onto the manifold $\manifold$. In particular, for each $v\in\tangentspace p$, $\gamma_v(t):=\Exp_p(tv)$ defines a geodesic. The inverse of $\Exp_p$, when it exists, is called the Riemannian logarithmic map at $p$ and denoted by $\Log_p$. 

In statistical analysis, it is desirable to compare the tangent vectors from different tangent spaces. To this end, one may transport the tangent vectors into a common tangent space in which tangent vectors can be directly compared by vector subtraction. For a Riemannian manifold, there is a unique (parallel) transport associated with the Riemannian metric and realized by Levi--Civita connection. In this paper, unless otherwise stated, parallel transport is performed along shortest geodesics between two points $y$ and $z$, denoted by $\pt_{y}^z$, which moves tangent vectors from the tangent space $T_y\manifold$ to $T_z\manifold$ in a smooth way and meanwhile preserves the inner product.

A smooth vector bundle, denoted by $\pi:\mathcal E\rightarrow \manifold$ or simply $\mathcal E$, consists of a base smooth manifold $\manifold$, a smooth manifold  $\mathcal E$ called total space, and a smooth bundle projection $\pi$, such that for every $p\in\manifold$, the fiber $\pi^{-1}(p)$ is a $k$-dimensional real vector space, and there is an open neighborhood $U\subset \manifold$ of $p$ and a diffeomorphism $\Phi:\pi^{-1}(U)\rightarrow U\times\real^k$ satisfying the property that for all $z\in U$, $(\pi\circ\Phi^{-1})(z,v)=z$ for all $v\in\real^k$ and the map $v\mapsto\Phi^{-1}(z,v)$ is a linear isomorphism between $\real^k$ and $\pi^{-1}(z)$. The map $\Phi$ is called a local trivialization. 
A prominent example of vector bundle is the space composed by the union of all tangent spaces of a manifold, which is called the tangent bundle of the manifold, where the tangent space at each point is a fiber. To identify different fibers, one can introduce a parallel transport $\vpt$ on a vector bundle along a curve $\gamma$ on the base manifold. Such parallel transport must satisfy the following axioms: 1) $\vpt_{p}^{p}$ is the identity map on $\pi^{-1}(p)$ for all $p\in\manifold$, 2) $\vpt_{\gamma(u)}^{\gamma(t)}\circ\vpt_{\gamma(s)}^{\gamma(u)}=\vpt_{\gamma(s)}^{\gamma(t)}$, and 3) the dependence of $\vpt$ on $\gamma$, $s$ and $t$ are smooth. The parallel transport $\pt$ introduced previously for a Riemannian manifold is indeed a parallel transport on the tangent bundle. In Section \ref{subsec:bundle} we shall construct a new type of vector bundle and a parallel transport on it. If for each fiber in a smooth vector bundle there is an inner product and the inner product smoothly varies from fiber to fiber, then the inner products are collectively referred to as a smooth bundle metric. The aforementioned Riemannian metric is indeed a smooth bundle metric on the tangent bundle.

\subsection{Riemannian functional data}
Functional data in which each function takes values in a Riemannian manifold are termed Riemannian functional data and modeled by the Riemannian random process \citep{Lin2019}. Specifically, let $\manifold$ be a $d$-dimensional Riemannian manifold and $X$  a $\manifold$-valued random process indexed by a compact domain $\tdomain\in\real$, i.e., $X:\tdomain\times \Omega \rightarrow \manifold$, where $\Omega$ is the sample space of the underlying probability space. In reality, measurements of $X$ are often corrupted by noise. To accommodate this common practice, we assume that the actual observable process is $Y$ which is indexed by the same domain $\tdomain$.

The process $X$ is said to be of second order, if for each $t\in\tdomain$, $F(p,t)=\expect d_{\manifold}^2(X(t),p) < \infty$ for some $p\in\manifold$ and hence for all $p\in\manifold$ due to the triangle inequality. The minimizer of $F(p,t)$, if it exists, is called the Fr\'echet mean of $X(t)$ and denoted by $\mu(t)$, i.e., 
\begin{equation}\label{eq:mu-def}
\mu(t):=\underset{p\in\manifold}{\arg\min}\,F(p,t).
\end{equation}
The concept of the Fr\'echet mean generalizes the mean from the Euclidean space to the Riemannian manifold and plays an important role in analysis of data residing in a Riemannian manifold. Under fairly general conditions, the Fr\'echet mean exists  and is unique \citep{Bhattacharya2003,Sturm2003,Afsari2011}, for instance, when the manifold is of nonpositive sectional curvature \citep[p.146,][]{Lee1997} or data are located in a small subspace of the manifold. Formally, we make the following assumption.
\begin{assumption}
	\label{ass:muexist}  
	The Fr\'{e}chet mean functions of $X$ and $Y$  exist and are unique.
\end{assumption}

As the manifold $\manifold$ is not a vector space, it is challenging to directly study the processes $X$ and $Y$. A common strategy is to transform them into tangent spaces, in which the vector structure can facilitate the analysis, via Riemannian logarithmic maps. This requires an additional assumption to ensure the well-posedness of the Riemannian logarithmic maps. For simplicity, we assume the following sufficient condition, which can be relaxed by a delicate formulation via cut locus\footnote{See Section \ref{sec:rm} of the online supplementary material for a precise definition.}.  
\begin{assumption}\label{ass:Xnearmu} 
	There exists a geodesically convex\footnote{A subset in a Riemannian manifold is geodesically convex if for any two points in the subset there is a unique shortest geodesic that is contained in the subset and connects the points.} subset $\mathcal Q\subset\manifold$ such that $X(t),Y(t)\in\mathcal Q$ for all $t\in\tdomain$.
\end{assumption}
If the manifold is of nonpositive sectional curvature, $\mathcal Q$ can be taken to be $\manifold$ and thus the above assumption becomes superfluous.  
Examples of manifolds of this kind include hyperbolic manifolds, tori and the space of symmetric positive-definite matrices endowed with the affine-invariant metric \citep{Moakher2005}, Log-Euclidean metric \citep{Arsigny2007} or Log-Cholesky metric \citep{Lin2019Riemannian}. {An example $\mathcal Q$ for Riemannian manifolds of positive sectional curvature is the hypersphere $\mathbb S^k=\{(x_0,\ldots,x_k)\in\real^{k+1}:x_0^2+\cdots+x_k^2=1\}$ or the positive orthant $\mathcal Q=\{(x_0,\ldots,x_k)\in\mathbb S^{k}: x_j\geq 0\text{ for all }j=0,\ldots,k\}$, which has applications in compositional data analysis \citep{Dai2018}, where $k$ is a positive integer.}

Under Assumptions~\ref{ass:muexist} and~\ref{ass:Xnearmu}, the Riemannian logarithmic maps $\Log_{\mu(t)}\{X(t)\}$ and $\Log_{\mu(t)}\{Y(t)\}$ are well defined. In addition, we can further model the observed process by $$Y(t)=\Exp_{\mu(t)}(\Log_{\mu(t)}\{X(t)\}+\varepsilon(t)),$$ where $\varepsilon(t)\in \tangentspace{\mu(t)}$ represents the random noise in the tangent space, is independent of $X$, and satisfies $\expect \varepsilon(t)=0$ and $\Exp_{\mu(t)}\varepsilon(t)\in \mathcal Q$. With this setup, the mean functions of $X$ and $Y$ are the same, in analogy to the Euclidean case; see Lemma \ref{lem:mu=muast} below. 

\begin{lemma}
	\label{lem:mu=muast}
	If Assumptions~\ref{ass:muexist} and \ref{ass:Xnearmu} hold, and $\manifold$ is complete and simply connected, then $\expect\{\Log_{\mu(\cdot)} X(\cdot)\}=0$. In addition, if $Y(t)=\Exp_{\mu(t)}(\Log_{\mu(t)}X(t)+\varepsilon(t))$, where $\varepsilon(t)\in \tangentspace{\mu(t)}$ is independent of $X$ and satisfies $\expect \varepsilon(\cdot)=0$,  then $\mu$ is also  the Fr\'echet mean function of $Y$.
\end{lemma}

Now we are ready to model sparsely observed Riemannian functional data. First, the sample functions $X_{1},\ldots,X_{n}$ are considered as i.i.d. realizations of $X$. However,  accessible are their noisy copies $Y_1,\ldots,Y_n$, rather than $X_1,\ldots,X_n$. To further accommodate the practice that functions are recorded at discrete points, we assume each $Y_{i}$ is only observed at $m_{i}$ time points $T_{i,1},\ldots,T_{i,m_{i}}\in\tdomain$. Specifically, the observed data are $\{(T_{ij},Y_{ij})\in \tdomain\times \manifold:1\leq i\leq n,1\leq j\leq m_{i}\}$ with  $Y_{ij}=\Exp_{\mu(T_{ij})}(\Log_{\mu(T_{ij})}\{X_i(T_{ij})\}+\varepsilon_{ij})$, where the centered random elements  $\varepsilon_{ij}\in\tangentspace{\mu(T_{ij})}$ are independent of each other and also independent of $\{X_{i}:1\leq i\leq n\}$.

\subsection{Covariance function of Riemannian functional data}
In addition to the Fr\'echet mean function, the covariance structure of Riemannian functional data is essential for downstream analysis, for instance, functional principal component analysis. In \cite{Lin2019} the covariance structure is modeled by the covariance operator of  $\Log_{\mu(\cdot)}X(\cdot)$ from the random element perspective \citep[Chapter 7,][]{Hsing2015} and also by the covariance function of $\Log_{\mu(\cdot)}X(\cdot)$ with respect to a frame\footnote{See Section \ref{sec:rm} in the online supplementary material for a precise definition.}. The covariance operator is not computationally friendly to sparse data, while the frame-dependent covariance function is not compatible with most smoothing methods; see Section~\ref{sec:invariance} of the supplement for more details. 

To develop a frame-independent intrinsic concept of the covariance function from the perspective of stochastic processes, we first revisit the covariance between two centered random vectors $U$ and $V$. When they are in a common Euclidean space, it is classically defined as the matrix $\expect (UV^\top)$. When $U$ and $V$ are in different general inner product spaces $\mathbb{U}$ and $\mathbb{V}$, a matrix representation of the covariance is definable if one picks an orthonormal basis for each of $\mathbb{U}$ and $\mathbb{V}$. To eliminate the dependence on the orthonormal bases, we take an operator perspective to treat the covariance $C$ of $U$ and $V$ as a linear operator between $\mathbb U$ and $\mathbb V$ characterized by 
$$
\innerprod{Cu}v_{\mathbb{V}}:=\expect(\innerprod{U}{u}_{\mathbb U}\innerprod{V}{v}_{\mathbb V}),\quad \forall u\in\mathbb{U},v\in\mathbb{V},
$$
where $\innerprod{\cdot}{\cdot}_{\mathbb{U}}$ and $\innerprod{\cdot}{\cdot}_{\mathbb{V}}$ denote the inner products of $\mathbb{U}$ and $\mathbb V$, respectively. 
To simplify the notation, we write $C=\expect(U\otimes V)$.

Observe that $\Log_{\mu(\cdot)}X(\cdot)$ ($\Log_{\mu}X$ for short) is a random vector field along the curve $\mu$ with $\expect(\Log_{\mu}X)=0$ {according to Lemma~\ref{lem:mu=muast}} (also Theorem 2.1 of \cite{Bhattacharya2003}).  
Given that $\Log_{\mu(s)}X(s)\in{}\tangentspace{\mu(s)}$ and $\Log_{\mu(t)}X(t)\in\tangentspace{\mu(t)}$, and both $\tangentspace{\mu(s)}$ and $\tangentspace{\mu(t)}$ are Hilbert spaces, we define the covariance function for $X$ by 
\begin{equation}\label{eq:cov-func}
\covarop(s,t):=\expect\{\Log_{\mu(s)}X(s) \otimes \Log_{\mu(t)}X(t)\},\quad\text{for }(s,t)\in \tdomain^2.
\end{equation}
This covariance function is clearly \emph{independent} of any frame or coordinate system. This feature fundamentally and distinctly separates  \eqref{eq:cov-func} from the frame-dependent covariance function (5) defined in \cite{Lin2019} for the coordinate of $\Log_{\mu}X$ with respect to a frame along the mean function. Moreover, \eqref{eq:cov-func} can be viewed as the intrinsic covariance function of the covariance operator $\mathbf C$ proposed in \cite{Lin2019}. Specifically, under some measurability or continuity assumption on $X$ and the condition that $\expect \int_{\tdomain}\|\Log_{\mu(t)}X(t)\|_{\mu(t)}^2<\infty$, the process $\Log_{\mu}X$ can be regarded as a random element in the Hilbert space 
$$
\mathscr{T}(\mu):=\{Z:Z(\cdot)\in \tangentspace{\mu(\cdot)}, \int_{\tdomain}\langle Z(t),Z(t)\rangle^{2}_{\mu(t)}\diffop t<\infty\}
$$  
endowed with the inner product 
$\llangle Z_{1},Z_{2}\rrangle_{\mu}:=\int_{\tdomain}\innerprod{Z_{1}(t)}{Z_{2}(t)}_{{\mu(t)}}\diffop t$ for $Z_{1},Z_{2}\in\mathscr{T}(\mu)$. The covariance operator $\mathbf C:\mathscr{T}(\mu)\rightarrow \mathscr{T}(\mu)$ for $X$ can be defined by 
\begin{equation}\label{eq:cov-op}
\llangle \mathbf C u,v\rrangle_{\mu}:=\expect(\llangle \Log_\mu X,u\rrangle_{\mu}\llangle \Log_\mu X,v\rrangle_{\mu}) \quad\text{ for } u,v\in\mathscr{T}(\mu). 
\end{equation}
The following theorem, which generalizes Theorem 7.4.3 of \cite{Hsing2015} to Riemannian random processes, shows that the proposed covariance function induces the covariance operator $\mathbf C$.
\begin{theorem}
	\label{pro:C=Cst}
	Let  $\covarop(\cdot,\cdot)$ and $\mathbf C$ be defined in \eqref{eq:cov-func} and \eqref{eq:cov-op}, respectively. Suppose that $X$ is
	mean-square continuous, i.e., $\lim_{k\rightarrow\infty}\expect d^2(X(t_k),X(t))=0$ for any $t\in\tdomain$ and any sequence $\{t_k\}$ in $\tdomain$ converging to $t$. Also assume that $X$ is jointly measurable, i.e, $X:\tdomain\times \Omega\rightarrow\manifold$ is measurable with respect to the product $\sigma$-field on $\tdomain\times\Omega$, where $\Omega$ is the sample space of the underlying probability space. Then under Assumptions \ref{ass:muexist} and \ref{ass:Xnearmu}, for all $t\in\tdomain$ and $u\in\mathscr{T}(\mu)$, we have
	$$
	(\mathbf C u)(t)=\int_{\tdomain}\covarop(s,t)u(s)\diffop s.
	$$ 
\end{theorem}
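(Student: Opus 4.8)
The plan is to follow the classical template of Theorem~7.4.3 of \cite{Hsing2015}: show that the function $t\mapsto\int_\tdomain\covarop(s,t)u(s)\,\diffop s$ is a well-defined element of $\mathscr{T}(\mu)$, and then identify it with $\mathbf Cu$ by pairing with an arbitrary $v\in\mathscr{T}(\mu)$ and applying Fubini's theorem. Throughout I write $Z:=\Log_\mu X$ and $\sigma^2(t):=\expect\|Z(t)\|^2_{\mu(t)}=\expect\, d_{\manifold}^2(X(t),\mu(t))$. \emph{Step 1 (set-up).} First I would record what the hypotheses deliver. Joint measurability of $X$, together with continuity of $\mu$ (itself a consequence of mean-square continuity and uniqueness of the Fr\'echet mean), makes $t\mapsto Z(t)$ a measurable section; mean-square continuity makes $\sigma^2(\cdot)$ continuous, hence bounded on the compact domain $\tdomain$, so that $\expect\|Z\|^2_\mu=\int_\tdomain\sigma^2(t)\,\diffop t<\infty$ by Tonelli's theorem and $Z$ is a bona fide $\mathscr{T}(\mu)$-valued random element, which makes $\mathbf C$ in \eqref{eq:cov-op} well defined. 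For each $(s,t)$ the operator $\covarop(s,t)=\expect\{Z(s)\otimes Z(t)\}\colon\tangentspace{\mu(s)}\to\tangentspace{\mu(t)}$ satisfies, by the Cauchy--Schwarz inequality over the sample space $\Omega$, the bound $\|\covarop(s,t)\|_{\mathrm{op}}\le\sigma(s)\sigma(t)$; moreover $(s,t)\mapsto\covarop(s,t)$ is continuous, which one checks in a local trivialization of the covariance vector bundle (equivalently, through a smooth orthonormal frame along $\mu$, available under Assumption~\ref{ass:Xnearmu}), where each scalar entry is an ordinary covariance of mean-square-continuous real processes; in particular it is Bochner measurable.

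\emph{Step 2 (the right-hand side is in $\mathscr{T}(\mu)$).} Fix $u\in\mathscr{T}(\mu)$ and define $g(t):=\int_\tdomain\covarop(s,t)u(s)\,\diffop s$ as a Bochner integral in $\tangentspace{\mu(t)}$. The operator-norm bound and the Cauchy--Schwarz inequality over $\tdomain$ give
\begin{align*}
\int_\tdomain\|\covarop(s,t)u(s)\|_{\mu(t)}\,\diffop s
&\le\sigma(t)\int_\tdomain\sigma(s)\|u(s)\|_{\mu(s)}\,\diffop s
\le\sigma(t)\,\|u\|_\mu\bigl(\expect\|Z\|^2_\mu\bigr)^{1/2}<\infty
\end{align*}
for every $t$, so $g(t)$ is a well-defined element of $\tangentspace{\mu(t)}$ and $\|g\|^2_\mu\le\|u\|^2_\mu\bigl(\expect\|Z\|^2_\mu\bigr)\int_\tdomain\sigma^2(t)\,\diffop t=\|u\|^2_\mu\bigl(\expect\|Z\|^2_\mu\bigr)^2<\infty$; hence $g\in\mathscr{T}(\mu)$.

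\emph{Step 3 (identification $g=\mathbf Cu$).} It suffices to show $\llangle g,v\rrangle_\mu=\llangle\mathbf Cu,v\rrangle_\mu$ for every $v\in\mathscr{T}(\mu)$. By the defining property of $\covarop(s,t)$,
\begin{align*}
\llangle g,v\rrangle_\mu
&=\int_\tdomain\int_\tdomain\innerprod{\covarop(s,t)u(s)}{v(t)}_{\mu(t)}\,\diffop s\,\diffop t\\
&=\int_\tdomain\int_\tdomain\expect\bigl(\innerprod{Z(s)}{u(s)}_{\mu(s)}\innerprod{Z(t)}{v(t)}_{\mu(t)}\bigr)\,\diffop s\,\diffop t.
\end{align*}
The integrand is dominated, in absolute value, by $\|u(s)\|_{\mu(s)}\|v(t)\|_{\mu(t)}\sigma(s)\sigma(t)$, whose $\diffop s\,\diffop t$-integral is at most $\|u\|_\mu\|v\|_\mu\,\expect\|Z\|^2_\mu<\infty$, so Fubini's theorem permits interchanging $\expect$ with the double integral; what results is $\expect\bigl(\llangle Z,u\rrangle_\mu\llangle Z,v\rrangle_\mu\bigr)=\llangle\mathbf Cu,v\rrangle_\mu$ by \eqref{eq:cov-op}. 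Since $v$ is arbitrary, $g=\mathbf Cu$ in $\mathscr{T}(\mu)$, which is the claimed identity.

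The only genuinely delicate point is Step~1: making ``$\covarop(s,t)$ depends continuously (hence measurably) on $(s,t)$'' rigorous when the source and target Hilbert spaces $\tangentspace{\mu(s)}$ and $\tangentspace{\mu(t)}$ themselves vary with the base point, and likewise interpreting the Bochner integral of the bundle-valued section $s\mapsto\covarop(s,t)u(s)$. This is precisely where the smooth structure of the covariance vector bundle (or, more elementarily, a common smooth orthonormal frame along $\mu$) enters; once it is in place, everything else is a routine reprise of the Euclidean argument, the substantive analytic input being the two applications of Cauchy--Schwarz and the single application of Fubini's theorem.
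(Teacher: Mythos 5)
Your proposal is correct and follows essentially the same route as the paper's proof: pair $\int_\tdomain\covarop(s,\cdot)u(s)\,\diffop s$ with an arbitrary $v\in\mathscr{T}(\mu)$, use the defining property of $\covarop(s,t)$, interchange expectation and integration by Fubini, and identify the result with $\llangle\mathbf Cu,v\rrangle_\mu$ (the paper closes via the Riesz representation theorem, you via the arbitrariness of $v$ — the same step). Your Steps 1--2 simply make explicit the measurability, domination, and well-definedness checks that the paper's proof leaves implicit.
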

In light of this result, in the sequel we often use the same notation $\covarop$ to denote both the covariance operator and the covariance function in \eqref{eq:cov-func}. 
The proposed covariance function enables estimating the covariance operator $\mathbf C$ through estimating $\covarop(s,t)$ for each $(s,t)\in \tdomain\times\tdomain$ in a frame-independent fashion. The frame-independent feature is of particular importance to deriving a frame-invariant estimate in the more practical scenario that only discrete and noisy observations are available so that  smoothing is desirable; see Section \ref{sec:estimation} for more detail.

\subsection{The vector bundle of covariance and parallel transport}\label{subsec:bundle}
To estimate the covariance function in \eqref{eq:cov-func}, it seems rather intuitive to perform smoothing over the raw covariance  
\begin{equation}\label{eq:raw-cov}
	\hat{\covarop}_{i,jk}:=\Log_{\hat{\mu}(T_{ij})}Y_{ij}\otimes \Log_{\hat{\mu}(T_{ik})}Y_{ik} \in \mathbb{L}(\hat{\mu}(T_{ij}),\hat{\mu}(T_{ik})),
\end{equation}
 where $\hat\mu$ is an estimate of $\mu$ to be detailed in Section~\ref{sec:estimation}. The first challenge encountered is that these raw observations $\hat{\covarop}_{i,jk}$ do not reside in a common vector space. This also gives rise to the second challenge in defining the key concept of \emph{smoothness} of the function $\covarop$ and its estimate. 
To circumvent these difficulties, we consider the spaces $\mathbb{L}(p,q)$ consisting of all linear maps from $\tangentspace p$ to $\tangentspace q$, and their disjoint union $\mathbb{L}=\bigcup_{(p,q)\in\manifold^2}\mathbb L(p,q)$. Then $\hat{\covarop}_{i,jk}$ are encompassed by the space $\mathbb L$, and in addition, the covariance function $\covarop$ is now viewed as an $\mathbb L$-valued function. Although the space $\mathbb L$ is not a vector space so that the smoothness is not definable in the classic sense, we observe that $\mathbb L$ comes with a canonical smooth structure induced by the manifold $\manifold$, and continuity, differentiability and smoothness relevant to statistics can be defined with reference to this smooth structure as follows.

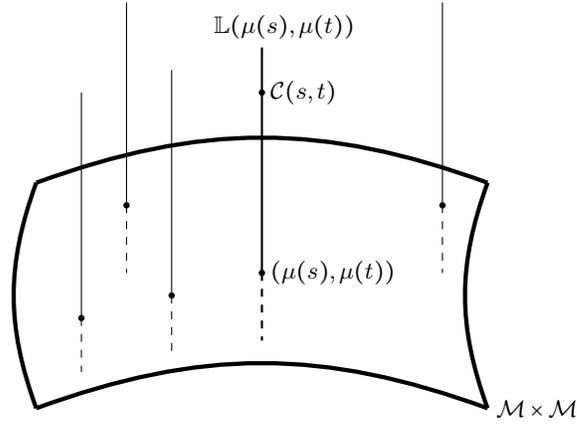
\begin{figure}[t]
	\centering
	\begin{tikzpicture}[scale=0.6]
	\draw[ultra thick][-] (0,0) to [out=20,in=160] (10,0);
	\draw[ultra thick][-] (0,5) to [out=20,in=160] (10,5);
	\draw[ultra thick][-] (0,0) to [out=110,in=250] (0,5);
	\draw[ultra thick][-] (10,0) to [out=110,in=250] (10,5);
	\node[right] at (10,0) {$\manifold\times\manifold$};
	\draw[][-] (2,4.5) to (2,9);
	\draw[][dashed][-] (2,4.5) to (2,3);
	\fill (2,4.5) circle (2pt);
	\draw[thick][-] (5,3) to (5,8);
	\draw[thick][dashed][-] (5,3) to (5,1.5);
	\draw[][-] (3,2.5) to (3,7.5);
	\draw[][dashed][-] (3,2.5) to (3,1.2);
	\fill (3,2.5) circle (2pt);
	\draw[][-] (1,2) to (1,7);
	\draw[][dashed][-] (1,2) to (1,0.8);
	\fill (1,2) circle (2pt);
	\draw[][-] (9,4.5) to (9,9);
	\draw[][dashed][-] (9,4.5) to (9,3);
	\fill (9,4.5) circle (2pt);
	\fill (5,3) circle (2pt);
	\node[right] at (5,3) {$(\mu(s),\mu(t))$};
	\fill (5,7) circle (2pt);
	\node[right] at (5,7) {$\covarop(s,t)$};
	\node[above] at (5.5,8) {$\mathbb{L}(\mu(s),\mu(t))$};
	\end{tikzpicture}
	\caption{Illustration of the vector bundle $\mathbb L$. 
		The thick bending parallelogram presents the product manifold $\manifold\times\manifold$ and the vertical lines represent fibers. The value of $\covarop(s,t)$ is located within the fiber $\mathbb L(\mu(s),\mu(s))$ at the point $(\mu(s),\mu(t))\in\manifold\times\manifold$.}
	\label{fig:section}
\end{figure}

We first observe that $\mathbb{L}$ is a vector bundle on $\manifold\times \manifold$, with $\pi:\mathbb{L}\rightarrow\manifold\times\manifold$ defined by  $\pi(\mathbb{L}(p,q))=(p,q)$ being the bundle projection and $\mathbb{L}(p,q)$ being the fiber attached to the point $(p,q)\in\manifold\times\manifold$; see Figure~\ref{fig:section} for a graphical illustration. To define the smoothness structure on $\mathbb{L}$ induced by the manifold $\manifold$, let $\{(U_\alpha,\phi_\alpha):\alpha\in J\}$ for an index set $J$ be an atlas of $\manifold$. Recall that each chart $(U_\alpha,\phi_\alpha)$ gives rise to a smoothly varying basis of $\tangentspace p$ for each $p\in U_\alpha$. Such basis is denoted by $B_{\alpha,1}(p),\ldots,B_{\alpha,d}(p)$. 
For $(p,q)\in U_\alpha\times U_\beta$, the tensor products $B_{\alpha,j}(p)\otimes B_{\beta,k}(q)$, $j,k=1,\ldots,d$, form a basis for the space $\mathbb L(p,q)$. 
Each element $v\in \mathbb L(p,q)$ is then  identified with its coefficients $v_{jk}$ with respect to this basis, i.e., $v=\sum_{j,k=1}^dv_{jk}B_{\alpha,j}(p)\otimes B_{\beta,k}(q)$. 
For each $U_\alpha\times U_\beta$, we define the map $\varphi_{\alpha,\beta}(p,q,\sum_{j,k=1}^dv_{jk}B_{\alpha,j}(p)\otimes B_{\beta,k}(q))=(\phi_\alpha(p),\phi_\beta(q),v_{11},v_{12},\ldots,v_{dd})\in\real^{2d+d^2}$, for $(p,q)\in U_\alpha\times U_\beta$. 
The collection $\{(\pi^{-1}(U_\alpha\times U_\beta),\varphi_{\alpha,\beta}):(\alpha,\beta)\in J^2\}$ indeed is a smooth atlas that turns  $\mathbb L$ into a smooth manifold. Moreover, $\mathbb L$ is a \emph{smooth} vector bundle with the projection map $\pi$ and the  local trivializations $\Phi_{\alpha,\beta}:\pi^{-1}(U_\alpha\times U_\beta)\rightarrow U_\alpha\times U_\beta\times \real^{d^2}$ defined as $\Phi_{\alpha,\beta}(p,q,\sum_{j,k=1}^dv_{jk}B_{\alpha,j}(p)\otimes B_{\beta,k}(q))=(p,q,v_{11},v_{12},\ldots,v_{dd})$.
\begin{theorem}\label{thm:L}
	The collection $\{(\pi^{-1}(U_\alpha\times U_\beta),\varphi_{\alpha,\beta}):(\alpha,\beta)\in J^2\}$ is a smooth atlas on $\mathbb L$. With this atlas, $\mathbb L$ is a smooth vector bundle with 
	the smooth  projection map $\pi$ and smooth local trivializations $\Phi_{\alpha,\beta}$. In addition, any compatible atlas of  the manifold $\manifold$ gives rise to the same smooth vector bundle $\mathbb L$.
\end{theorem}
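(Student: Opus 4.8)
The plan is to recognize $\mathbb L$ as the standard ``external'' bundle built from the tangent bundle of $\manifold$ over the product $\manifold\times\manifold$ --- identifying $\mathbb L(p,q)$ with $\mathrm{Hom}(\tangentspace p,\tangentspace q)$ --- so that the charts $\varphi_{\alpha,\beta}$ are its expected coordinate representations. I would split the proof into the three assertions of the statement: (i) $\{(\pi^{-1}(U_\alpha\times U_\beta),\varphi_{\alpha,\beta})\}$ is a smooth atlas; (ii) $(\pi,\{\Phi_{\alpha,\beta}\})$ exhibit $\mathbb L$ as a smooth vector bundle; and (iii) the construction does not depend on the chosen atlas of $\manifold$. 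In all three, the single computation with content is the form of the chart transition maps; the rest is bookkeeping.

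\emph{Step 1 (the atlas).} Since $\mathbb L$ is a priori only a set, I would first topologize it by declaring each $\varphi_{\alpha,\beta}$ to be a homeomorphism onto the open set $\phi_\alpha(U_\alpha)\times\phi_\beta(U_\beta)\times\real^{d^2}\subset\real^{2d+d^2}$; this is consistent precisely because the transition maps below are homeomorphisms, and Hausdorffness and second countability of $\mathbb L$ are inherited from $\manifold$ because $\mathbb L$ is covered by the sets $\pi^{-1}(U_\alpha\times U_\beta)$, each homeomorphic to an open subset of a Euclidean space. The domains cover $\mathbb L$ since the $U_\alpha$ cover $\manifold$, and each $\varphi_{\alpha,\beta}$ is by construction a bijection onto an open set, so the crux is smoothness of $\varphi_{\alpha',\beta'}\circ\varphi_{\alpha,\beta}^{-1}$ on its domain. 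Over $U_\alpha\cap U_{\alpha'}$ the two coordinate bases are related by $B_{\alpha,j}(p)=\sum_{l}A^{(\alpha'\alpha)}_{jl}(p)B_{\alpha',l}(p)$, where $A^{(\alpha'\alpha)}(p)$ is read off from the Jacobian of $\phi_{\alpha'}\circ\phi_{\alpha}^{-1}$ at $\phi_\alpha(p)$ and hence, in $\phi_\alpha$-coordinates, varies smoothly with $p$; similarly over $U_\beta\cap U_{\beta'}$. Since the pairing $u\otimes v:\,w\mapsto\innerprod{u}{w}v$ is $\real$-bilinear, expanding each basis element gives $B_{\alpha,j}(p)\otimes B_{\beta,k}(q)=\sum_{l,m}A^{(\alpha'\alpha)}_{jl}(p)A^{(\beta'\beta)}_{km}(q)B_{\alpha',l}(p)\otimes B_{\beta',m}(q)$, so the fiber coefficient vector transforms by a Kronecker product of $A^{(\alpha'\alpha)}(p)$ and $A^{(\beta'\beta)}(q)$ (up to transposition, per convention). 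Hence $\varphi_{\alpha',\beta'}\circ\varphi_{\alpha,\beta}^{-1}$ acts on the base coordinates by the smooth coordinate transitions $\phi_{\alpha'}\circ\phi_\alpha^{-1}$ and $\phi_{\beta'}\circ\phi_\beta^{-1}$ and on the fiber coordinates by multiplication by a matrix that is a smooth $\mathrm{GL}(d^2,\real)$-valued function of $(p,q)$; this is smooth, which proves (i).

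\emph{Step 2 (bundle structure and atlas independence).} Read in the charts $\varphi_{\alpha,\beta}$ on $\mathbb L$ and $\phi_\alpha\times\phi_\beta$ on $\manifold\times\manifold$, the map $\pi$ is the coordinate projection $(x,y,v)\mapsto(x,y)$, hence smooth and a surjective submersion, and each fiber $\pi^{-1}(p,q)=\mathbb L(p,q)$ is a $d^2$-dimensional real vector space; in the same charts $\Phi_{\alpha,\beta}$ is the identity on the fiber coordinates and $(\phi_\alpha,\phi_\beta)$ on the base, hence a diffeomorphism covering $\mathrm{id}_{U_\alpha\times U_\beta}$, with $v\mapsto\Phi_{\alpha,\beta}^{-1}(p,q,v)=\sum_{j,k}v_{jk}B_{\alpha,j}(p)\otimes B_{\beta,k}(q)$ a linear isomorphism onto $\mathbb L(p,q)$. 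The transition functions between overlapping trivializations are exactly the smooth $\mathrm{GL}(d^2,\real)$-valued maps from Step 1, so $\mathbb L$ satisfies the defining axioms of a smooth vector bundle --- alternatively one may invoke the vector-bundle chart lemma (e.g., Lemma 10.6 of Lee's \emph{Introduction to Smooth Manifolds}) applied to this transition data. For (iii), if $\{(V_\gamma,\psi_\gamma)\}$ is an atlas of $\manifold$ compatible with $\{(U_\alpha,\phi_\alpha)\}$, their union is again an atlas, so each $\psi_\gamma\circ\phi_\alpha^{-1}$ is smooth; repeating the computation of Step 1 with the Jacobians of these composed transitions shows that every cross-transition between a $\varphi$-chart and a chart built from the $\psi_\gamma$'s is smooth, whence the union of the two induced atlases on $\mathbb L$ is a smooth atlas, and the two atlases of $\manifold$ therefore determine the same maximal smooth structure. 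Since the set $\mathbb L$, the projection $\pi$, and the linear structure on each $\mathbb L(p,q)$ were defined with no reference to any atlas, the resulting smooth vector bundle is literally the same object.

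\emph{Main obstacle.} The only step requiring care is the transition computation in Step 1: keeping the four index families straight, correctly matching the change-of-basis matrices with (transposes or inverses of) Jacobians of coordinate transitions, and using $\real$-bilinearity of the operator $\otimes$ so that the expansion goes through cleanly; once the Kronecker-product form is in hand, smoothness and invertibility of the transition matrices are immediate. Topologizing $\mathbb L$ and the remaining parts of Steps 2 and 3 are routine.
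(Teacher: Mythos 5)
Your proposal is correct and follows essentially the same route as the paper: the central computation in both is that the fiber coordinates transform by the Kronecker product of the (smooth, invertible) change-of-basis Jacobians on the two factors, after which the smooth vector bundle structure and its independence of the chosen atlas follow from the vector bundle construction (chart) lemma of Lee. You additionally spell out the topologization of $\mathbb L$ and a direct verification of the bundle axioms, which the paper leaves implicit, but this is elaboration rather than a different argument.
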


With the above smooth structure, the covariance function $\covarop$ in \eqref{eq:cov-func}, viewed as an $\mathbb L$-valued function, 
is said to be $\kappa$-times continuously differentiable in $(s,t)$, if $(\mu(s),\mu(t))\in U_\alpha\times U_\beta$ implies that  $\varphi_{\alpha,\beta}(\mu(s),\mu(t),\covarop(s,t))$ is $\kappa$-times continuously differentiable in $(s,t)$, where we recall that $\{(\pi^{-1}(U_\alpha\times U_\beta),\varphi_{\alpha,\beta}):(\alpha,\beta)\in J^2\}$ is a smooth atlas on $\mathbb L$.  From this perspective, the constructed vector bundle $\mathbb L$ provides a framework to rigorously define the regularity of $\covarop$. In this framework, estimating the covariance function $\covarop$ amounts to smoothing the discrete raw observations $\hat\covarop_{i,jk}$ in the vector bundle $\mathbb{L}$.

Although the vector bundle $\mathbb L$ provides a qualitative framework for defining differentiability or other smoothness regularity, it does not provide a quantitative characterization. Roughly speaking, the smooth vector bundle $\mathbb L$ allows one to check whether $\covarop$ is differentiable or smooth, but not to  measure how rapidly $\covarop$ changes relative to $(s,t)$. In other words, derivatives that quantify the rate of change of the function $\covarop$ at a given pair $(s,t)$ and that are consistent across all compatible atlases for $\mathbb L$ require an additional structure as follows. 
We first introduce the parallel transport on the covariance vector bundle $\mathbb L$ to identify different fibers and to compare the elements from the fibers. 
Suppose that $(p_{1},q_{1}),(p_{2},q_{2})\in \manifold\times\manifold$ and $\gamma(t)=(\gamma_{p}(t),\gamma_{q}(t))$ is the shortest geodesic connecting $(p_{1},q_{1})$ to $(p_{2},q_{2})$. The parallel transport $\vpt_{(p_{1},q_{1})}^{(p_{2},q_{2})}$ from a fiber $\mathbb{L}(p_{1},q_{1})$ to another fiber $\mathbb{L}(p_{2},q_{2})$ is naturally constructed from the parallel transport operators $\pt_{p_{2}}^{p_{1}}$ and $\pt_{q_{1}}^{q_{2}}$ on $\manifold$ by 
\begin{equation}\label{eq:parallel-transport}
(\vpt_{(p_{1},q_{1})}^{(p_{2},q_{2})}C)(u):=\pt_{q_{1}}^{q_{2}}(C(\pt_{p_{2}}^{p_{1}}u)),
\end{equation}
where $C\in \mathbb{L}(p_{1},q_{1})$ and $u\in T_{p_{2}}\manifold$. To distinguish between the parallel transport on the manifold and the one on the vector bundle $\mathbb L$, notationally we use the caliligraphic symbol $\mathcal P$ for the manifold while the script symbol $\mathscr P$ for the bundle. The parallel transport $\mathscr P$ further determines a covariant derivative\footnote{For a definition of the covariant derivative, see Chapter 4 (specifically, Page 50) of \cite{Lee1997} or Section \ref{sec:rm} in the online supplementary material.} on the bundle. 
\begin{theorem}\label{thm:cov-deriv}
	For a tangent vector $V$ of $\manifold\times\manifold$ at $(p,q)$, the map $\nabla_V$ defined by 
	\begin{equation}\label{eq:covariant-deriv}
	\nabla_V W:=\underset{h\rightarrow0}{\lim}\frac{\vpt_{\gamma(h)}^{\gamma(0)}W(\gamma(h))-W(\gamma(0))}{h}:=\frac{\diffop}{\diffop t}\vpt_{\gamma(t)}^{\gamma(0)}W(\gamma(t))\big|_{t=0}
	\end{equation}
	for all differentiable section $W$ is a covariant derivative in the direction of $V$, where $\gamma$ is a smooth curve\footnote{It can be shown that the value $\nabla_V W$ depends on $V$, but not on $\gamma$.} in $\manifold\times\manifold$ with initial point $\gamma(0)=(p,q)$ and initial velocity $\gamma^\prime(0)=V$, and a section is any function $W:\manifold\times\manifold\rightarrow\mathbb L$ satisfying $W(p,q)\in \mathbb L(p,q)$ for all $(p,q)\in\manifold\times\manifold$.
\end{theorem}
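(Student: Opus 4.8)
The plan is to verify that the operator $\nabla_V$ defined in \eqref{eq:covariant-deriv} satisfies the defining axioms of a covariant derivative (connection) on the vector bundle $\mathbb L$, namely: (i) $\nabla_V W$ is $\real$-linear in $V$ and $C^\infty(\manifold\times\manifold)$-linear in the slot occupied by $V$; (ii) $\nabla_V W$ is $\real$-linear in $W$; (iii) the Leibniz rule $\nabla_V(fW)=(Vf)W+f\nabla_V W$ holds for every smooth function $f$ on $\manifold\times\manifold$; and (iv) $\nabla_V W$ depends only on the germ of $W$ along a curve through $(p,q)$ with velocity $V$, so that the expression is well defined. The route I would take is to reduce everything to the corresponding known facts for the Levi--Civita parallel transport $\pt$ on $\manifold$, exploiting the explicit formula \eqref{eq:parallel-transport} that expresses $\vpt$ as a conjugation-type composition of the transports on the two factors.

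First I would fix a chart neighborhood $U_\alpha\times U_\beta$ around $(p,q)$ and write a section $W$ in the local frame $B_{\alpha,j}(\cdot)\otimes B_{\beta,k}(\cdot)$ coming from Theorem~\ref{thm:L}, so $W(p',q')=\sum_{j,k}W_{jk}(p',q')\,B_{\alpha,j}(p')\otimes B_{\beta,k}(q')$ with smooth coefficient functions $W_{jk}$. The key computation is to insert the formula \eqref{eq:parallel-transport} into \eqref{eq:covariant-deriv}: for a geodesic $\gamma(t)=(\gamma_p(t),\gamma_q(t))$ with $\gamma(0)=(p,q)$, $\gamma'(0)=V=(V_p,V_q)$, one has
$$
\vpt_{\gamma(h)}^{\gamma(0)}W(\gamma(h))=\pt_{\gamma_q(h)}^{q}\,\bigl(W(\gamma(h))\circ\pt_{p}^{\gamma_p(h)}\bigr),
$$
and then expand each building block to first order in $h$ using the standard fact that $t\mapsto\pt_{\gamma_p(t)}^{p}$ and $t\mapsto\pt_{p}^{\gamma_p(t)}$ are smooth curves of linear isometries whose derivatives at $t=0$ are governed by the Christoffel symbols of $\manifold$ along $\gamma_p$ (and similarly for $\gamma_q$). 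Differentiating the product of three $h$-dependent factors by the product rule then shows the limit exists and equals a sum of three terms: one carrying the ordinary derivative of the coefficient matrix $W_{jk}$ along $\gamma$, and two correction terms linear in the Christoffel symbols — i.e. exactly the coordinate expression of a connection. This simultaneously establishes existence of the limit, $\real$-linearity in $W$, and the tensorial/linear dependence on $V$ (since $V_p,V_q$ enter only through $\gamma_p'(0),\gamma_q'(0)$ linearly).

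For the Leibniz rule, with $f$ smooth on $\manifold\times\manifold$ I would use that $\vpt_{\gamma(h)}^{\gamma(0)}$ is linear on each fiber, so $\vpt_{\gamma(h)}^{\gamma(0)}\bigl(f(\gamma(h))W(\gamma(h))\bigr)=f(\gamma(h))\,\vpt_{\gamma(h)}^{\gamma(0)}W(\gamma(h))$, and then applying $\frac{d}{dt}\big|_{t=0}$ and the ordinary product rule yields $(Vf)W(p,q)+f(p,q)\nabla_V W$, because $\frac{d}{dt}\big|_{t=0}f(\gamma(t))=\gamma'(0)f=Vf$. Well-definedness — independence of the chosen geodesic representative — follows from the observation that the final formula is expressed purely in terms of $V$, the coefficient functions $W_{jk}$ and their first derivatives at $(p,q)$, and the Christoffel symbols at $(p,q)$; equivalently, one notes that any connection is determined by its values along geodesics, and conversely the right-hand side of \eqref{eq:covariant-deriv} is unchanged if $\gamma$ is replaced by any curve with the same initial velocity, since higher-order behavior of $\gamma$ contributes only at order $h^2$.

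The main obstacle I anticipate is the bookkeeping in differentiating the composite $\pt_{\gamma_q(h)}^{q}\circ\,(\,\cdot\,)\circ\pt_{p}^{\gamma_p(h)}$: one must be careful that the two manifold-level transports act on different arguments (the output side via $\pt_q$ and the input side via $\pt_p$), so the product rule produces a conjugation-derivative structure $\dot A\,W + W\dot B$ rather than a naive sum, and one must check that $\dot A$ and $\dot B$ are precisely the covariant-derivative defects of the two tangent-bundle connections. This is where the explicit identity \eqref{eq:parallel-transport} earns its keep. Once the first-order expansion is organized correctly, verification of axioms (i)--(iv) is routine, and one may also remark — though it is not strictly needed — that $\nabla$ is metric-compatible with the bundle metric to be introduced, because $\vpt$ is built from the isometries $\pt$; but for the present statement it suffices to confirm the connection axioms as above.
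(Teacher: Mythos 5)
Your proposal is correct, but it takes a genuinely different route from the paper. The paper's proof of Theorem~\ref{thm:cov-deriv} is a two-line appeal to the literature: it asserts that $\vpt$ in \eqref{eq:parallel-transport} satisfies the axioms of a parallel transport on a vector bundle (Definition A.54 of \cite{Rodrigues2016}) and then invokes the general correspondence by which any such transport induces a covariant derivative via exactly the limit formula \eqref{eq:covariant-deriv}. You instead verify the connection axioms directly: writing $\vpt_{\gamma(h)}^{\gamma(0)}W(\gamma(h))=\pt_{\gamma_q(h)}^{q}\circ W(\gamma(h))\circ\pt_{p}^{\gamma_p(h)}$ (which is the correct reading of \eqref{eq:parallel-transport} with source $\gamma(h)$ and target $\gamma(0)$), expanding the three factors to first order in $h$ via the Christoffel symbols of the two factor connections, and reading off existence of the limit, linearity, tensoriality in $V$, the Leibniz rule, and independence of the choice of curve with velocity $V$. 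This is sound; the conjugation structure you flag is precisely the coordinate expression of the induced connection on $\mathrm{Hom}(T_p\manifold,T_q\manifold)\cong T_p^{*}\manifold\otimes T_q\manifold$, and your Leibniz argument via fiberwise linearity of $\vpt$ is exactly right. What each approach buys: the paper's citation is shorter but defers two nontrivial checks (that $\vpt$ satisfies the transport axioms, and the transport-to-connection correspondence itself) to the reference, whereas your computation is self-contained, sidesteps the need to verify the composition axiom for $\vpt$ altogether, and produces the explicit local formula for $\nabla_V W$ that the paper later uses implicitly when it Taylor-expands $\vpt_{(\mu(T_{ij}),\mu(T_{ik}))}^{(\mu(s),\mu(t))}\covarop(T_{ij},T_{ik})$ in \eqref{equ:taylor} inside the proof of Theorem~\ref{the:covariance}. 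The only caveat is cosmetic: since the paper defines $\vpt$ only along shortest geodesics, you should note that for $h$ small the segment of $\gamma$ from $\gamma(0)$ to $\gamma(h)$ is minimizing, so the transport in \eqref{eq:covariant-deriv} is well defined; this is immediate but worth a sentence.
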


The covariant derivative of a section $W$ can be viewed as the first derivative of the section. It quantifies the rate and direction of change of $W$ at each point in $\manifold\times\manifold$. This applies to the covariance function $\covarop$ since it can be viewed as a section along the surface  $\mu\times\mu:\tdomain\times\tdomain\rightarrow\manifold\times\manifold$. In addition, the ``partial derivative'' $\frac{\partial}{\partial s}\covarop(s,t)|_{s=s_0}$ of $\covarop(s,t)$ with respect to $s$ at $s_0$ can be understood as the limit 
$$\underset{h\rightarrow0}{\lim}\frac{\vpt_{\gamma(s_0+h)}^{\gamma(s_0)}\covarop(s_0+h,t)-\covarop(s_0,t)}{h}\in T_{(\mu(s_{0}),\mu(t))}\manifold^2$$
with $\gamma(s)=(\mu(s),\mu(t))$. Furthermore, since the derivative $\frac{\partial}{\partial s}\covarop(s,t)$ is again a section of the vector bundle, one can define the partial derivatives of $\frac{\partial}{\partial s}\covarop(s,t)$, which can be regarded as the second derivatives of $\covarop$. Higher-order  derivatives can be defined in a recursive way.

To further illustrate the parallel transport and the induced covariant derivative on the vector bundle $\pi:\mathbb L \rightarrow\manifold\times\manifold$, consider a simple example in which $\manifold=\real$ and the bundle $\mathbb{L}$ is then parameterized by $(x,y,z)\in\real^2\times\real$. Let $g:\manifold\times\manifold\rightarrow\mathbb L$ be  a smooth section. 
For visualization, we fix $y=0$ and write $f(x)=g(x,0)$. For the smooth function  $f(x)$  shown in Figure~\ref{fig:derivative}, the classic definition of the derivative of $f(x)$ at $x_{1}$ is
$$
\frac{\partial}{\partial x}f(x)\Big|_{x=x_{1}}:=\lim_{x_{2}\rightarrow x_{1}}\frac{f(x_{2})-f(x_{1})}{x_{2}-x_{1}}.
$$
From the perspective of the vector bundle, each point $x$ in the $x$-axis is attached with a fiber $\real_{x}$ which is simply a copy of the $z\text{-axis}=\real$. Since $f(x_{1})\in \real_{x_{1}}$ while $f(x_{2})\in \real_{x_{2}}$, the operation $f(x_{2})-f(x_{1})$ would not be well defined if we did not identify $\real_{x_1}$ with $\real_{x_2}$. The identification between $\real_{x_1}$ and $\real_{x_2}$ is canonical, and nothing else but parallelly transporting $\real_{x_2}$ to $\real_{x_1}$. This inspiring observation applies to general manifolds and covariant derivatives. Specifically, the covariant derivative is defined by parallel transporting  $f(x_{2})$ from the fiber $\mathbb F_{x_2}$ into the fiber $\mathbb F_{x_1}$ and then performing differentiation therein, i.e., 
$$
\frac{\partial}{\partial x}f(x)\Big|_{x=x_{1}}
:=\lim_{x_{2}\rightarrow x_{1}}\frac{\vpt_{x_{2}}^{x_{1}}(f(x_{2}))-f(x_{1})}{x_{2}-x_{1}} \in \mathbb{F}_{x_{1}}.
$$

\begin{figure}[t]
	\centering
	\begin{minipage}[c]{0.45\linewidth}
		\centering
		\begin{tikzpicture}[scale=0.55]
		\draw[ultra thick][<->] (0,5)--(0,0)--(10,0);
		\draw[ultra thick][->] (0,0)--(-1,-1);
		\draw[thick] (1,3) to [out=-50,in=230] (9,5);
		\fill (4,1.91) circle (2pt);
		\fill (7,3.18) circle (2pt);
		\draw[dashed] (4,1.91)--(0,1.91);
		\draw[dashed] (4,0)--(4,5);
		\draw[dashed] (7,3.18)--(0,3.18);
		\draw[dashed] (7,0)--(7,5);
		\draw (4,1.91)--(7,3.18);
		\node[below] at (4,0) {$x_{1}$};
		\node[below] at (7,0) {$x_{2}$};
		\node[left] at (0,1.91) {$z_{1}$};
		\node[left] at (0,3.18) {$z_{2}$};
		\node[right] at (4,1.71) {$f(x_{1})$};
		\node[right] at (7,2.98) {$f(x_{2})$};
		\node[right] at (9,5) {$f(x)$};
		\node[below] at (10,0) {$x$-axis};
		\node[right,rotate=45] at (-1.7,-0.8) {$y$-axis};
		\node[above] at (0,5) {$z\text{-axis}=\real$};
		\node[above] at (4,5) {$\real_{x_{1}}$};
		\node[above] at (7,5) {$\real_{x_{2}}$};
		\end{tikzpicture}
	\end{minipage}
	\qquad\quad\quad
	\begin{minipage}[c]{0.45\linewidth}
		\centering
		\begin{tikzpicture}[scale=0.55]
		\draw[ultra thick][->] (15,0) to [out=-10,in=170] (25,0);
		\draw[thick] (16,3) to [out=-50,in=230] (24,5);
		\fill (19,1.91) circle (2pt);
		\fill (22,3.18) circle (2pt);
		\draw[dashed] (19,0)--(19,5);
		\draw[dashed] (19,2.88) to [out=-10,in=185] (22,3.18);
		\draw[dashed] (22,0.3)--(22,5);
		\node[below] at (19,0) {$x_{1}$};
		\node[below] at (22,0) {$x_{2}$};
		\node[right] at (19,1.71) {$f(x_{1})$};
		\node[right] at (22,2.98) {$f(x_{2})$};
		\node[above] at (19,2.9) {$\vpt_{x_{2}}^{x_{1}}(f(x_{2}))$};
		\node[right] at (24,5) {$f(x)$};
		\node[above] at (19,5) {$\mathbb F_{x_{1}}$};
		\node[above] at (22,5) {$\mathbb F_{x_{2}}$};
		\end{tikzpicture}
	\end{minipage}
	\caption{Illustration of classic differentiation (left) and general covariant derivative (right).\label{fig:derivative}}
\end{figure}
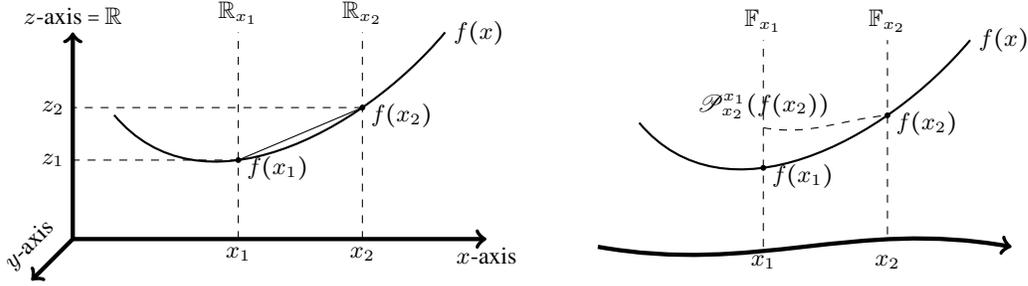

Finally, when smoothing the raw covariance function $\hat\covarop_{i,jk}$, one needs to quantify the discrepancy between the data and the fit. Such discrepancy is often measured by a distance function or inner product on the data space. Fortunately,  the vector bundle $\mathbb L$ comes with a natural bundle metric. Specifically, for any $(p,q)\in\manifold\times\manifold$, the metric $G_{(p,q)}:\mathbb{L}(p,q)\times \mathbb{L}(p,q)\rightarrow \mathbb{R}$ is defined as the Hilbert--Schmidt inner product, i.e., 
\begin{equation}\label{eq:bundle-metric}
G_{(p,q)}(L_{1},L_{2})=\sum_{k=1}^d \langle L_1e_k,L_2e_k\rangle_q \quad\text{ for } L_1,L_2\in \mathbb L(p,q),
\end{equation}
where $e_1,\ldots,e_d$ denotes an orthonormal basis of $\tangentspace{p}$. One can show that the definition \eqref{eq:bundle-metric} does not depend on the choice of the orthonormal basis. In fact, $G$ is a smooth bundle metric and the parallel transport \eqref{eq:parallel-transport} defines an isometry between any two fibers,  as asserted by the following result.
\begin{theorem}\label{thm:bundle-metric}
	The metric defined in \eqref{eq:bundle-metric} is a vector bundle metric that smoothly varies with $(p,q)\in\manifold\times\manifold$ and is preserved by the parallel transport in \eqref{eq:parallel-transport}.
\end{theorem}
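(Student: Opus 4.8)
The plan is to establish the three assertions in turn: (i) for each $(p,q)$ the formula \eqref{eq:bundle-metric} defines an inner product on $\mathbb L(p,q)$ that is independent of the chosen orthonormal basis; (ii) this family of inner products varies smoothly over $\manifold\times\manifold$; and (iii) the parallel transport \eqref{eq:parallel-transport} is a fiberwise linear isometry. For (i), bilinearity and symmetry are inherited from $\langle\cdot,\cdot\rangle_q$, and positive-definiteness is clear because $G_{(p,q)}(L,L)=\sum_k\|Le_k\|_q^2$ vanishes only if $L$ annihilates a basis of $\tangentspace p$, i.e. only if $L=0$. To remove the dependence on the basis, I would pick a second orthonormal basis $\{\tilde e_k\}$ of $\tangentspace p$, write $\tilde e_k=\sum_j O_{jk}e_j$ with $O=(O_{jk})$ orthogonal, expand $\sum_k\langle L_1\tilde e_k,L_2\tilde e_k\rangle_q$ bilinearly, and collapse $\sum_k O_{jk}O_{lk}=\delta_{jl}$ to recover $\sum_j\langle L_1e_j,L_2e_j\rangle_q$; equivalently, $G_{(p,q)}(L_1,L_2)=\operatorname{tr}(L_2^{*}L_1)$ with $L_2^{*}:\tangentspace q\to\tangentspace p$ the adjoint of $L_2$, so the invariance is just that of the trace.

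For (ii), recall that a family of fiber inner products is a smooth bundle metric exactly when $(p,q)\mapsto G_{(p,q)}(W_1(p,q),W_2(p,q))$ is smooth for every pair of smooth local sections $W_1,W_2$; it thus suffices to check this in the trivializing charts $\pi^{-1}(U_\alpha\times U_\beta)$ of Theorem~\ref{thm:L}. There we write $W_i=\sum_{j,k}a^{(i)}_{jk}\,B_{\alpha,j}\otimes B_{\beta,k}$ with smooth coefficient functions $a^{(i)}_{jk}$. Using $(B_{\alpha,j}(p)\otimes B_{\beta,k}(q))(u)=\langle B_{\alpha,j}(p),u\rangle_p\,B_{\beta,k}(q)$ and the identity $\sum_i\langle B_{\alpha,j}(p),e_i\rangle_p\langle B_{\alpha,l}(p),e_i\rangle_p=\langle B_{\alpha,j}(p),B_{\alpha,l}(p)\rangle_p$, valid for any orthonormal basis $\{e_i\}$ of $\tangentspace p$, a direct computation gives
\[
G_{(p,q)}(W_1,W_2)=\sum_{j,k,l,m}a^{(1)}_{jk}(p,q)\,a^{(2)}_{lm}(p,q)\,g^{\alpha}_{jl}(p)\,g^{\beta}_{km}(q),
\]
where $g^{\alpha}_{jl}(p)=\langle B_{\alpha,j}(p),B_{\alpha,l}(p)\rangle_p$ and $g^{\beta}_{km}(q)=\langle B_{\beta,k}(q),B_{\beta,m}(q)\rangle_q$ are the components of the Riemannian metric in the respective frames. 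Thus, in the trivialization, $G$ is represented by the Kronecker product $g^{\alpha}(p)\otimes g^{\beta}(q)$ of the metric-component matrices, which depends smoothly on $(p,q)$ since the Riemannian metric does; this proves smoothness (and reconfirms the basis-independence from (i)).

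For (iii), let $(p_1,q_1),(p_2,q_2)\in\manifold\times\manifold$ be joined by a shortest geodesic (unique, e.g., within the geodesically convex set of Assumption~\ref{ass:Xnearmu}), let $\{e_k\}$ be an orthonormal basis of $\tangentspace{p_1}$, and write $\mathscr Q=\vpt_{(p_1,q_1)}^{(p_2,q_2)}$. Since $\pt_{p_1}^{p_2}$ preserves inner products, $\{\pt_{p_1}^{p_2}e_k\}$ is an orthonormal basis of $\tangentspace{p_2}$, so I may evaluate $G_{(p_2,q_2)}$ with respect to it. By the definition \eqref{eq:parallel-transport}, together with $\pt_{p_2}^{p_1}\circ\pt_{p_1}^{p_2}=\mathrm{id}_{\tangentspace{p_1}}$ and the fact that $\pt_{q_1}^{q_2}$ is an isometry, each summand satisfies
\[
\big\langle(\mathscr QL_1)(\pt_{p_1}^{p_2}e_k),(\mathscr QL_2)(\pt_{p_1}^{p_2}e_k)\big\rangle_{q_2}=\big\langle\pt_{q_1}^{q_2}(L_1e_k),\pt_{q_1}^{q_2}(L_2e_k)\big\rangle_{q_2}=\langle L_1e_k,L_2e_k\rangle_{q_1},
\]
and summing over $k$ gives $G_{(p_2,q_2)}(\mathscr QL_1,\mathscr QL_2)=G_{(p_1,q_1)}(L_1,L_2)$. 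Since $\mathscr QL=\pt_{q_1}^{q_2}\circ L\circ\pt_{p_2}^{p_1}$ is linear in $L$, this exhibits $\mathscr Q$ as a linear isometry, hence a bijection, between the two fibers.

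None of the three parts is deep. The most bookkeeping is needed in (ii): one must unwind ``smoothly varying metric'' through the trivialization atlas of Theorem~\ref{thm:L}, convert the non-orthonormal coordinate frames $B_{\alpha,j}$ into the Riemannian metric components via the Parseval-type identity above, and recognize the outcome as a polynomial in the section coordinates with smooth coefficients. Conceptually the heart of the statement is (iii), although its proof is short once one hits upon transporting an orthonormal basis of $\tangentspace{p_1}$ forward by $\pt$ before computing $G$ at the target fiber; the only subtlety there is to note the uniqueness of the shortest geodesic (hence of $\pt$), which holds on geodesically convex subsets such as $\mathcal Q$ in Assumption~\ref{ass:Xnearmu}.
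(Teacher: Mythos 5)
Your proof is correct and follows essentially the same route as the paper's: basis-independence via an orthogonal change of basis (the paper phrases the same computation as invariance of the trace under $A_i\mapsto \mathbf O A_i$), and isometry of $\vpt$ by pushing an orthonormal basis of $T_{p_1}\manifold$ forward with $\pt_{p_1}^{p_2}$ before evaluating $G_{(p_2,q_2)}$. Your part (ii) is in fact more explicit than the paper, which only remarks that smoothness follows by an argument similar to Theorem~\ref{thm:L}; your formula expressing $G$ in a trivialization through the metric-component matrices $g^{\alpha}(p)$ and $g^{\beta}(q)$ supplies the detail the paper omits.
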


The inner product \eqref{eq:bundle-metric} is defined for linear operators that map a Hilbert space, such as $\tangentspace p$, to another potentially different Hilbert space, such as $\tangentspace q$. The inner product of this type, although mathematically well established \citep[e.g., Definition 2.3.3 and Proposition B.0.7 by][]{Prevot2007}, is less seen in statistics; the commonly used one is usually linear operators that map a Hilbert space into the same Hilbert space. 
The metric $G$ also induces a norm, denoted by $\|\cdot\|_{G(p,q)}$ or simply $\|\cdot\|_{G}$,  on each fiber $\mathbb L(p,q)$. This norm in turn defines a distance on each fiber $\mathbb L(p,q)$ by $\|A-B\|_{G(p,q)}$ for $A,B\in\mathbb L(p,q)$, which is an integrated part of the loss function in \eqref{eq:cov-estimate} for estimating the covariance function.

The smooth vector bundle $\mathbb L$ together with the covariant derivative \eqref{eq:covariant-deriv} and the bundle metric \eqref{eq:bundle-metric}, termed \emph{covariance vector bundle} in this paper, paves the way for estimation of the covariance function \eqref{eq:cov-func} from sparsely observed Riemannian functional data. The smooth structure and the covariant derivative together provide an intrinsic mechanism to quantify the regularity of $\covarop$. For example, it makes meaningful the statement that the second derivatives of $\covarop(s,t)$ are continuous. In the Euclidean case, statements of this kind are often adopted as assumptions that are fundamental to theoretical analysis of estimators derived from a smoothing method. The developed vector bundle and covariant derivative now enable us to extend such assumptions to the manifold setting, as demonstrated in Section \ref{sec:theory:cov} where we analyze the theoretical properties of the proposed estimator in Section \ref{sec:estimation} for the covariance function  $\covarop$. The parallel transport \eqref{eq:parallel-transport} and the bundle metric \eqref{eq:bundle-metric} allow an intrinsic measure of the discrepancy of objects in the covariance vector bundle. Such measure is critical for finding an estimator for $\covarop$ and quantifying estimation quality, as illustrated in the following sections.

\section{Estimation}\label{sec:estimation}
The first step is to estimate the mean function, for which we adopt the local linear regression method proposed by \cite{Petersen2019} and also employed by \cite{Dai2019}. Define the local weight function
$$
\hat{w}(T_{ij},t,h_{\mu})=\frac{1}{\hat{\sigma}_{0}^{2}(t)}K_{h_{\mu}}(T_{ij}-t)\{\hat{u}_{2}(t)-\hat{u}_{1}(t)(T_{ij}-t)\},
$$
where $\hat{u}_{k}(t)=\sum_{i}\lambda_{i}\sum_{j}K_{h_{\mu}}(T_{ij}-t)(T_{ij}-t)^{k}$,  $\hat{\sigma}_{0}^{2}(t)=\hat{u}_{0}(t)\hat{u}_{2}(t)-\hat{u}_{1}^{2}(t)$ and $K_{h_{\mu}}(\cdot)=K(\cdot/h_\mu)/h_\mu$ for a kernel function $K$ with bandwidth $h_{\mu}>0$. The estimate $\hat{\mu}$ is defined as the minimizer of the weighted function 
$$
\hat{Q}_{n}(y,t)=\sum_{1\leq i\leq n}\lambda_{i}\sum_{1\leq j\leq m_{i}}\hat{w}(T_{ij},t,h)d_{\manifold}^{2}(Y_{ij},y),
$$
i.e., 
$$
\hat{\mu}(t)=\underset{y\in\manifold}{\arg\min}~\hat{Q}_{n}(y,t),
$$
where the weights $\{\lambda_{i}\}_{1\leq i\leq n}$ are subject-specific and satisfy $\sum_{i=1}^{n}\lambda_{i}m_{i}=1$.
For the Euclidean case $\manifold=\real$, the objective function $\hat{Q}_{n}(y,t)$ coincides with the sum of squared error loss used in~\cite{Zhang2016}. Two popular choices for $\lambda_i$ are $\lambda_{i}=({\sum_{i=1}^{n}m_{i}})^{-1}$ \citep{Yao2005a} that assigns equal weight to each observation, and   $\lambda_{i}=({nm_{i}})^{-1}$ \citep{Li2010} that  assigns equal weight to each subject. Other choices are discussed in \cite{Zhang2018}.

Given the parallel transport introduced in Section \ref{subsec:bundle}, we are allowed to move {the raw covariance}   $\hat\covarop_{i,jk}$ defined in \eqref{eq:raw-cov} from different fibers into the same fiber and employ the classic local linear smoothing on the transported observations. 
For $\hat\covarop_{i,jk}$  to be well defined, similar to Assumption \ref{ass:muexist}, we assume the existence and uniqueness of the empirical mean function $\hat\mu$ in the following Assumption~\ref{ass:muhatexist}. Such assumption always holds for manifolds of nonpositive sectional curvature, or may be replaced by a convexity condition on the distance function when Assumption \ref{ass:Xnearmu} holds.
In particular, Assumption~\ref{ass:muhatexist} is satisfied by the manifolds $\mathrm{Sym}_{AF}^+$ and $\mathrm{Sym}_{LC}^+$ of SPD matrices adopted in the simulation studies in Section~\ref{sec:simulation} and data application in Section \ref{sec:application} without additional conditions, as these manifolds have nonpositive sectional curvature. The assumption also holds for the sphere $\mathbb S^2$ used in the simulation studies when Assumption \ref{ass:Xnearmu} is fulfilled.

\begin{assumption}
	\label{ass:muhatexist}  
	The estimated mean function  $\hat\mu(t)$  exists and is unique for each $t\in\tdomain$.
\end{assumption}

To estimate $\covarop(s,t)$, the nearby raw observations $\hat{\covarop}_{i,jk}$ are parallelly transported into the fiber $\mathbb{L}(\hat{\mu}(s),\hat{\mu}(t))$, 
and the estimate $\hat{\covarop}(s,t)$ is set by $\hat{\covarop}(s,t)=\hat\beta_0$ with
\begin{equation}\label{eq:cov-estimate}
\begin{aligned}
(\hat\beta_0,\hat\beta_1,\hat\beta_2)=&\underset{\beta_{0},\beta_{1},\beta_{2}\in \mathbb{L}(\hat{\mu}(s),\hat{\mu}(t))}{\arg\min} \bigg\{ \sum_{i}\nu_{i}\sum_{j\neq k}
\|\vpt_{(\hat{\mu}(T_{ij}),\hat{\mu}(T_{ik}))}^{(\hat{\mu}(s),\hat{\mu}(t))}\hat{\covarop}_{i,jk}-\beta_{0}-\beta_{1}(T_{ij}-s) \\
&\qquad\quad -\beta_{2}(T_{ik}-t)\|_{G(\hat{\mu}(s),\hat{\mu}(t))}^{2}
K_{h_{\covarop}}(s-T_{ij})K_{h_{\covarop}}(t-T_{ik})\bigg\},
\end{aligned}
\end{equation}
where $h_{\covarop}>0$ is a bandwidth, $\vpt_{(\hat{\mu}(T_{ij}),\hat\mu(T_{ik}))}^{(\hat\mu(s),\hat\mu(t))}$ is the parallel transport along minimizing geodesics defined in \eqref{eq:parallel-transport}, and the weights $\{\nu_{i}\}_{1\leq i\leq n}$ are subject-specific and satisfy $\sum_{i=1}^{n}\nu_{i}m_{i}(m_{i}-1)=1$.
Similar to the estimation of the mean function, two popular choices for the weights are $\nu_{i}=({\sum_{i=1}^{n}m_{i}(m_{i}-1)})^{-1}$ \citep{Yao2005a} that assign equal weight to each observation, and $\nu_{i}=({{n}m_{i}(m_{i}-1)})^{-1}$ \citep{Li2010} that assign equal weight to each subject, while more options are studied in \cite{Zhang2018}.

The objective function in \eqref{eq:cov-estimate} involves only intrinsic concepts and thus is fundamentally different from the objective function in (5) of \cite{Dai2019} in which the raw observations $\hat\covarop_{i,jk}$ are computed in an ambient space. In addition, the quantities $\hat\covarop_{i,jk}$ in \eqref{eq:cov-estimate} are frame-independent and thus the resulting estimator is invariant to the frame\footnote{For the computational purpose, a frame might be adopted, but the resulting estimator is independent of the choice of the frame, since the objective function in \eqref{eq:cov-estimate} does not depend on any frame.}. This frame-independent feature makes  our estimator distinct from the non-invariant estimators discussed Section~\ref{sec:invariance} of the supplement.

\begin{remark}One might attempt to endow $\mathbb{L}$ with a distance $\rho$ so that the estimation is turned into a regression problem with a metric-space-valued response and the local linear method of \cite{Petersen2019} can be adopted. Such distance is expected to have the following properties: 
	\begin{itemize}
		\item The  distance $\rho$ on $\mathbb{L}$ coincides with the fiber metric $G$ for any two points on the same fiber. Specifically, for $L_{1},L_{2}\in\mathbb{L}(p,q)$, $\rho^2(L_{1},L_{2})=G_{(p,q)}(L_{1}-L_{2},L_{1}-L_{2})$.
		\item The distance $\rho$ on the zero section $W_0(p,q)=0\in \mathbb L(p,q)$ coincides with the geodesic distance on $\manifold\times\manifold$. Specifically, for $(p_{1},q_{1}),(p_{2},q_{2})\in\manifold\times\manifold$, $\rho(W_0(p_1,q_1),W_0(p_2,q_2))=d_{\manifold^2}((p_{1},q_{1}),(p_{2},q_{2}))$.
		\item When $\manifold$ is a Euclidean space, especially when $\manifold=\real$, the estimate derived from \cite{Petersen2019} under the distance $\rho$ coincides with the classic estimate, i.e., the estimate derived from the same method but applied to the observations $\hat\covarop_{i,jk}\in\real$ that are treated as real-valued responses.
	\end{itemize}
	However, such distance $\rho$ does not exist. On one hand, the positive-definiteness of the distance suggests that $\rho(\hat\covarop_{i_1,j_1k_1},\hat\covarop_{i_2,j_2k_2})\neq 0$ as long as  $\hat\covarop_{i_1,j_1k_1},\hat\covarop_{i_2,j_2k_2}\in \mathbb{L}$ reside in different fibers, i.e., when $\hat\mu(T_{i_1j_1})\neq \hat\mu(T_{i_2j_2})$ or $\hat\mu(T_{i_1k_1})\neq \hat\mu(T_{i_2k_2})$. On the other hand, when $\manifold=\real$, the quantities 
	$\hat\covarop_{i_1,j_1k_1}$ and $\hat\covarop_{i_2,j_2k_2}$ are treated as real numbers and thus their distance could be zero even when  $\hat\mu(T_{i_1j_1})\neq \hat\mu(T_{i_2j_2})$ or $\hat\mu(T_{i_1k_1})\neq \hat\mu(T_{i_2k_2})$. 
\end{remark}	

Once an estimate $\hat\covarop$ of the covariance function $\covarop$ is obtained, according to Theorem \ref{pro:C=Cst}, the intrinsic Riemannian functional principal component proposed in \cite{Lin2019} can be adopted. Specifically, the eigenvalues $\hat\lambda_k$ and eigenfunctions $\hat \psi_k$ of $\hat\covarop$ can be obtained by eigen-decomposition of $\hat\covarop$, e.g., via the method described in Section 2.3 of \cite{Lin2019}. For estimation of the scores $\xi_{ik}=\llangle \Log_{\mu}X_i,\psi_k\rrangle$ in the intrinsic Karhunen--Lo\'eve expansion $\Log_\mu X_i=\sum_{k=1}^\infty \xi_{ik}\psi_k$ proposed in \cite{Lin2019}, numerical approximation  to the integral $\llangle \Log_\mu X_i,\psi_j\rrangle$ is infeasible when the data are sparse. In the Euclidean setting, this issue is addressed by the technique of principal analysis through conditional expectation \citep[PACE,][]{Yao2005a}. The technique was also adopted by  \cite{Dai2019} for their ambient approach to Riemannian functional data analysis on sparsely observed data. To adapt this technique in our intrinsic framework, for each $\tangentspace{\hat\mu(T_{ij})}$, we fix an orthonormal basis $B_{ij,1},\ldots,B_{ij,d}$; in Proposition~\ref{prop:score} we will show that the computed scores do not depend on the choice of the basis.  Then, the observations $\Log_{\hat\mu(T_{ij})}Y_{ij}$ and the estimated eigenfunctions $\hat\psi_k(T_{ij})$ can be represented by their respective coordinate vectors $z_{ij}$ and $g_{k,ij}$  with respect to the basis. Similarly, the estimated covariance function $\hat\covarop(T_{ij},T_{il})$ at $(T_{ij},T_{il})$ can be represented by a matrix $C_{i,jl}$ of coefficients. By treating the vectors $z_{ij}$ as $\real^d$-valued observations, the best linear unbiased predictor (BLUP) of $\xi_{ik}$ is given by 
\begin{equation}\label{eq:fpc-score}
\hat\xi_{ik}=\hat\lambda_k g_{k,i}^\top \Sigma_{i}^{-1}z_i,
\end{equation}
where $g_{k,i}=(g_{k,i1}^\top,\ldots,g_{k,im_i}^\top)^\top$, $z_i=(z_{i1}^\top,\ldots,z_{im_i}^\top)^\top$ and 
$$\Sigma_i=\hat\sigma^2 \mathbf I + \begin{pmatrix}C_{i,11} & C_{i,12} & \cdots & C_{i,1m_{i}}\\
C_{i,21} & C_{i,22} & \cdots & C_{i,2m_{i}}\\
\vdots & \vdots & \ddots & \vdots\\
C_{i,m_{i}1} & C_{i,m_{i}2} & \cdots & C_{i,m_{i}m_{i}}
\end{pmatrix}$$
with $\hat\sigma^2=\sum_{i=1}^n\sum_{j=1}^{m_i}(ndm_i)^{-1}\mathrm{tr}\{z_{ij}z_{ij}^\top-\hat\covarop(T_{ij},T_{ij})\}$. The following invariance principle shows that the scores $\hat\xi_{ik}$ in \eqref{eq:fpc-score} are invariant to the choice of bases $B_{ij,1},\ldots,B_{ij,d}$. This extends the invariance principle of \cite{Lin2019} from the fully observed and/or dense design to the sparse case.
\begin{proposition}\label{prop:score}
	The principal component scores $\hat\xi_{ik}$ in \eqref{eq:fpc-score} do not depend on the choice of the orthonormal bases $\{(B_{ij,1},\ldots,B_{ij,d}):i=1,\ldots,n,j=1,\ldots,m_i\}$.
\end{proposition}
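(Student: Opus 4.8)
The plan is to trace how each quantity appearing in \eqref{eq:fpc-score} responds to a change of the orthonormal bases $\{(B_{ij,1},\dots,B_{ij,d})\}$ and to verify that the changes cancel in the quadratic form $g_{k,i}^\top\Sigma_i^{-1}z_i$. First I would observe that $\hat\lambda_k$ and $\hat\psi_k$ are themselves unaffected: the objective function \eqref{eq:cov-estimate} is frame-free, so $\hat\covarop$ is an intrinsic section of $\mathbb L$, and its eigenvalues and eigenfunctions, being the spectral data of an operator on $\mathscr T(\hat\mu)$, do not involve the bases $B_{ij,\cdot}$, which enter only the PACE step. Hence the only basis-dependent objects are the coordinate vectors $z_{ij}$, $g_{k,ij}$ and the coordinate matrices $C_{i,jl}$ and $\hat\covarop(T_{ij},T_{ij})$.

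Next I would fix a second choice of bases and, for each pair $(i,j)$, let $O_{ij}\in O(d)$ be the orthogonal change-of-basis matrix between the two orthonormal bases of $\tangentspace{\hat\mu(T_{ij})}$. Since $z_{ij}$ and $g_{k,ij}$ are coordinate vectors of the fixed tangent vectors $\Log_{\hat\mu(T_{ij})}Y_{ij}$ and $\hat\psi_k(T_{ij})$, they transform as $z_{ij}\mapsto O_{ij}^\top z_{ij}$ and $g_{k,ij}\mapsto O_{ij}^\top g_{k,ij}$; the coordinate matrix of the operator $\hat\covarop(T_{ij},T_{il})$ transforms as $C_{i,jl}\mapsto O_{ij}^\top C_{i,jl}O_{il}$, picking up one orthogonal factor from each of the two tangent spaces, and in particular $\hat\covarop(T_{ij},T_{ij})\mapsto O_{ij}^\top\hat\covarop(T_{ij},T_{ij})O_{ij}$. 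Because $z_{ij}z_{ij}^\top$ is likewise conjugated by $O_{ij}$ and the trace is invariant under orthogonal conjugation, the summand $\mathrm{tr}\{z_{ij}z_{ij}^\top-\hat\covarop(T_{ij},T_{ij})\}$, and hence $\hat\sigma^2$, is unchanged. Forming the block-diagonal orthogonal matrix $\mathbf O_i=\mathrm{diag}(O_{i1},\dots,O_{im_i})$, these rules read $z_i\mapsto\mathbf O_i^\top z_i$, $g_{k,i}\mapsto\mathbf O_i^\top g_{k,i}$, and, since the noise term $\hat\sigma^2\mathbf I$ is fixed while each block $C_{i,jl}$ acquires $O_{ij}^\top$ on the left and $O_{il}$ on the right, $\Sigma_i\mapsto\mathbf O_i^\top\Sigma_i\mathbf O_i$, whence $\Sigma_i^{-1}\mapsto\mathbf O_i^\top\Sigma_i^{-1}\mathbf O_i$.

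Substituting into \eqref{eq:fpc-score} and using $\mathbf O_i\mathbf O_i^\top=\mathbf I$, the recomputed score equals $\hat\lambda_k(\mathbf O_i^\top g_{k,i})^\top(\mathbf O_i^\top\Sigma_i^{-1}\mathbf O_i)(\mathbf O_i^\top z_i)=\hat\lambda_k g_{k,i}^\top\Sigma_i^{-1}z_i=\hat\xi_{ik}$, the original score, which proves the claim. The argument is mostly bookkeeping, and I expect the only points requiring genuine care to be: (i) checking that the convention by which $\hat\covarop(T_{ij},T_{il})$ is recorded as the block $C_{i,jl}$ of $\Sigma_i$ is precisely the one that makes $\Sigma_i$ transform by conjugation with the block-diagonal $\mathbf O_i$, so that its orthogonal factors align with those coming from $z_i$ and $g_{k,i}$; and (ii) the one non-obvious cancellation, namely the invariance of $\hat\sigma^2$ through the trace identity above. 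Everything else is the elementary orthogonal cancellation displayed in the previous sentence.
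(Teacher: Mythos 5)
Your proof is correct and follows essentially the same route as the paper's: express the change of bases as a block-diagonal orthogonal matrix $\mathbf O_i$, note that $z_i$, $g_{k,i}$ transform by $\mathbf O_i$ (up to the transpose convention) and $\Sigma_i$ by conjugation, and observe that the orthogonal factors cancel in $g_{k,i}^\top\Sigma_i^{-1}z_i$. Your explicit check that $\hat\sigma^2$ is invariant via trace-under-conjugation is a point the paper's proof leaves implicit, but it changes nothing in substance.
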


It remains to choose the bandwidths $h_\mu$ and $h_{\covarop}$. Although the theoretical analysis in the next section sheds light on how to choose them when the sample size is large, to determine appropriate values for them when the sample is limited, we propose the following $k$-fold cross-validation procedure. For an integer $k\geq 2$, divide the subjects into $k$ partitions, denoted by $P_1,\ldots,P_k\subset\{1,\ldots,n\}$, of roughly even size. Let $P_{-l}=\{1,\ldots,n\}\backslash P_{l}$ for $l=1,\ldots,k$. For a candidate value $h$ of $h_\mu$, its cross-validation error is computed by 
\[\textsc{cv}(h)=\sum_{l=1}^k\sum_{i\in P_j}\sum_{j=1}^{m_i}d^2_{\manifold}\big(\hat\mu_{-l}^h(T_{ij}),Y_{ij}\big),\]
where $\hat\mu_{-l}^h$ is the estimated mean function by using the bandwidth $h$ and the data $P_{-l}$. Among a set of candidate values of $h_\mu$, the one with the minimal cross-validation error is selected. A value for $h_{\covarop}$ can be selected by the similar procedure. As demonstrated in Section \ref{sec:simulation}, this $k$-fold cross-validation procedure is numerically effective.

\section{Asymptotic properties}\label{sec:theory}
In the sequel we assume $m_1=\cdots=m_n=m$ for a clear exposition; extension to more general cases is technically straightforward  \citep{Zhang2016}.
There are two popular types of designs, namely, the random design in which the design points $T_{ij}$ are i.i.d. sampled from a distribution, and the deterministic design in which $T_{ij}$ are predetermined and thus nonrandom. For the random design, the following assumption is commonly adopted \citep{Yao2005a,Li2010,Zhang2016}.
\begin{assumption}[Random Design]
	\label{ass:random:design} The design points $T_{ij}$, independent of other random quantities, are i.i.d. sampled from a distribution on $\tdomain$ with a probability density that is bounded away from zero and infinity.
\end{assumption}
In contrast, the deterministic design is less studied, especially the irregular deterministic design; for instance, \cite{Cai2011} considers only a regular deterministic design. In this paper, we consider a deterministic design with the following condition that basically states that the design points are sufficiently irregular. To focus on longitudinal observations, the regular design of a common grid case is studied in Section \ref{sec:regulardesign} of the supplement, and is 
not included in the following condition.
\begin{assumption}[Deterministic Design]\label{ass:deterministic-design} The design points $T_{ij}$ are nonrandom, and there exist constants $c_2\geq c_1>0$, such that for any interval $A,B\subset \tdomain$ and all $n\geq 1$,
	\begin{enumerate}[label=\textup{(\alph*)}]
		\item $\sup_{1\leq i\leq n}\sum_{j=1}^m 1_{T_{ij}\in A}\leq \max\{c_2m|A|,1\}$,
		\item $c_1nm|A|-1\leq \sum_{i,j}1_{T_{ij}\in A}\leq \max\{c_2nm|A|,1\}$, and 
		\item $c_1nm^2|A||B|-1\leq \sum_{i,j,k}1_{T_{ij}\in A}1_{T_{ik}\in B}\leq \max\{c_2nm^2|A||B|,1\}$,
	\end{enumerate}
	where $|A|$ denotes the length of $A$.
\end{assumption}

In many applications, the design points are neither completely random nor completely predetermined. For example, in longitudinal studies the visit of a patient may take place at a time that randomly deviates from the scheduled time. Such design includes both a deterministic part and a random component, which is termed \emph{hybrid design} in this paper.
{Specifically, suppose all measurements are scheduled to take place in some of the $L$ predetermined points of $\tdomain$. Without loss of generality, we assume these predetermined points are $\mathcal A_L:=\{s_{k}:1\leq k\leq L\}$ with $s_k=k/(L+1)$ and the set of all $m$-element subsets of $\mathcal A_L$ is $\mathcal S_m=\{\{t_1,\ldots,t_m\}: t_1,\ldots,t_m\in \mathcal A_L\text{ are distinct}\}$. 
There are $m\leq L$ measurements scheduled at distinct time points $\mathbb{S}_{i}:=\{S_{i1},\ldots,S_{im}\}\in \mathcal S_m$ for each curve $i$.}
Instead of $S_{ij}$, the actual measurement takes place at $T_{ij}=S_{ij}+\zeta_{ij}$ for some random variable $\zeta_{ij}\in(-1/(2L+2),1/(2L+2))$. We postulate the following condition in which we 
emphasize that $S_{i1},\ldots,S_{im}$ are \emph{not} independent and thus neither are $T_{i1},\ldots,T_{im}$. In addition, note that the condition also includes the special case that $S_{i1},\ldots,S_{im}$ are deterministic when $m=L$.
\begin{assumption}[Hybrid Design]
	\label{ass:hybrid}
~
	\begin{enumerate}[label=\textup{(\alph*)}]
		\item\label{ass:hybrid:design} For each $i=1,\ldots,n$, $S_{i1},\ldots,S_{im}$ are $m$ distinct elements randomly sampled (without replacement) from $\mathcal A_L$. In addition, $\mathbb S_1,\ldots,\mathbb S_n$ are i.i.d. random subsets of $\mathcal A_L$ and there are positive constants $c_1,c_2$ such that  $c_1|\mathcal S_m|^{-1}\leq \prob(\mathbb S_1\in\mathbf s)\leq c_2|\mathcal S_m|^{-1}$ for all $\mathbf s\in\mathcal S_m$.
		\item\label{ass:hybrid:design:random:error} $\zeta_{ij}$ are i.i.d. centered random variables taking values in $(-1/(2L+2),1/(2L+2))$, and are independent of other random quantities. In addition, there exist universal positive constants $c_3$ and $c_4$ such that the probability density $f_\zeta$ of $\zeta_{11}$ satisfies $c_3 L\leq \inf_s f_\zeta(s)\leq \sup_s f_\zeta(s)\leq c_4 L$.
	\end{enumerate}
\end{assumption}

Although these designs differ in nature, in the next two sections, we show that the estimator with either of these designs, respectively for the mean function and for covariance function, achieves the same convergence rate under suitable regularity conditions.

\subsection{Mean function}
The pointwise convergence rate of the estimate $\hat{\mu}(t)$ is established in \cite{Petersen2019}, while the uniform convergence rate is derived by  \cite{Dai2019}. For completeness, we include them here, and establish a new local uniform result that is needed in the theoretical analysis of the covariance estimator. First, we require the following assumptions, where the condition \ref{ass:basic:data-range} may be replaced with tail and moment conditions on the distributions of $Y$ and $X$ at the cost of heavier technicalities. In addition,  by modifying our proofs, the compactness in the condition  \ref{ass:basic:kernel} can be replaced with a condition on the decay rate of the kernel function when it moves away from zero, so that noncompact kernels such as Gaussian kernel can be accommodated.

\begin{assumption}
	\label{ass:basic}
	\quad
	\begin{enumerate}[label=\textup{(\alph*)}]
		\item\label{ass:basic:M} The Riemannian manifold $\manifold$ is complete and simply connected\footnote{See Section \ref{sec:rm} in the supplementary material for a precise definition of simple connectedness.}. 
		\item\label{ass:basic:data-range} There exists a compact 
		subset of $\mathcal K\subset\manifold$ such that $\prob\{X(t),Y(t)\in\mathcal{K}$ for all $t\in\tdomain\}=1$. 
		\item\label{ass:basic:tdomain} The domain $\tdomain$ is a compact interval.
		\item\label{ass:basic:kernel} The kernel function $K$ is Lipschitz continuous, symmetric, positive on $(-1,1)$, compactly supported on $[-1,1]$, and monotonically decreasing on $[0,1]$.
	\end{enumerate}
\end{assumption}

The following  regularity on the mean function or related quantities is adapted from \cite{Petersen2019} and is specialized to the Riemannian manifold. Part \ref{ass:mu:b:0} states that the Fr\'echet mean is well separated from the other points in terms of the Fr\'echet function $F^\ast(y,t):=\expect d_{\manifold}^2(X(t),y)$, while part \ref{ass:mu:c}  basically amounts to convexity of $F^\ast(\cdot,t)$ around $\mu(t)$; they hold, for example, when the manifold has nonpositive curvature or the data sufficiently concentrate on a geodesically convex region.

\begin{assumption}\label{ass:mu}~
	\begin{enumerate}[label=\textup{(\alph*)}]
		\item\label{ass:mu:a} The second partial derivative $\partial ^2_tF^\ast(y,t)$ is bounded on $\mathcal K\times \tdomain$.
		\item\label{ass:mu:b:0} For any $\delta>0$,
		\begin{align*}
		&\inf_{\stackrel{d_{\manifold}(y,{\mu}(t))>\delta}{t\in \tdomain}}\{F^{\ast}(y,t)-F^{\ast}(\mu(t),t)\}>0,
		\end{align*}
		\item\label{ass:mu:c} There exist $\eta_{1}>0$ and $C_{1}>0$ such that for all $t\in\tdomain$ and  all $y$ with $d_{\manifold}(y,\mu(t))<\eta_{1}$, 
		\begin{align*}
		& F^{\ast}(y,t)-F^{\ast}(\mu(t),t) - C_{1}d_{\manifold}(y,\mu(t))^{2}\geq 0.
		\end{align*}
		
	\end{enumerate}
\end{assumption}

The following proposition, whose proof, as well as proofs for other results in this section, is deferred to the Supplementary Material, states the point-wise and uniform convergence rates of the estimated mean function, where the point-wise rate is an immediate consequence of the local uniform rate stated in Proposition \ref{prop:random:mu:rate:local}. The condition  $nh_\mu\asympgt 1$ in the following is only needed for the deterministic design. 

\begin{proposition}
	\label{prop:random:mu:sup}
	Suppose that Assumptions~\ref{ass:muexist},~\ref{ass:Xnearmu},~\ref{ass:muhatexist},~\ref{ass:basic} and~\ref{ass:mu}. Under either of Assumptions \ref{ass:random:design},  \ref{ass:deterministic-design} and \ref{ass:hybrid}, if $h_{\mu}\rightarrow 0$ and $nmh_{\mu}\rightarrow \infty$, then for any fixed $t\in\tdomain$,
	\[
	d_{\manifold}^2(\mu(t),\hat{\mu}(t))=\Op\left(h_{\mu}^{4}+\frac{1}{n}+\frac{1}{nmh_{\mu}}\right),
	\]
	and if $h_{\mu}\rightarrow 0$, $nh_\mu\asympgt 1$ 
	and $nmh_{\mu}/\log n\rightarrow \infty$, then
	\[
	\sup_{t\in\tdomain}d_{\manifold}^2(\mu(t),\hat{\mu}(t))=\Op\left(h_{\mu}^{4}+\frac{\log n}{n}+\frac{\log n}{nmh_{\mu}}\right).
	\]
\end{proposition}

To derive the point-wise convergence rate of the estimator $\hat{\covarop}(s,t)$ in the next subsection, we require a  local convergence property of the estimator $\hat{\mu}$. The following Proposition \ref{prop:random:mu:rate:local}, which is new in the literature, shows that the local uniform convergence rate is the same as the point-wise rate in Proposition~\ref{prop:random:mu:sup}, and differs from the global uniform convergence rate that has an additional $\log n$ factor. The reason for this phenomenon is that $\expect \{K_{h_{\mu}}(T-t)\}=1$ at a fixed point $t$ but $\expect\{\sup_{t\in\tdomain} K_{h_{\mu}}(T-t)\}={1}/{h_{\mu}}\rightarrow \infty$. Therefore, the additional  $\log n$ factor is needed to offset this explosion in the case of global uniform convergence. In the local case, if $h=O(h_{\mu})$ and thus $\expect\{\sup_{\tau:|\tau-t|\leq h} K_{h_{\mu}}(T-\tau)\}=O({h}/{h_{\mu}})=O(1)$,  then no offset is required. The proposition also directly implies the point-wise rate in Proposition~\ref{prop:random:mu:sup}. 

\begin{proposition}
	\label{prop:random:mu:rate:local}
	Suppose that Assumptions~\ref{ass:muexist},~\ref{ass:Xnearmu},~\ref{ass:muhatexist},~\ref{ass:basic} and \ref{ass:mu} hold. Under  either of Assumptions \ref{ass:random:design}, \ref{ass:deterministic-design} and \ref{ass:hybrid}, if $h_{\mu}\rightarrow 0$ and  $nmh_{\mu}\rightarrow \infty$, then for any fixed $t$ and $h=O(h_{\mu})$,
	\[
	\sup_{\tau:|\tau-t|\leq h}d_{\manifold}^2(\mu(\tau),\hat{\mu}(\tau))=\Op\left(h_{\mu}^{4}+\frac{1}{n}+\frac{1}{nmh_{\mu}}\right).
	\]
\end{proposition}

\subsection{Covariance function}\label{sec:theory:cov}
We start with the following assumption on the regularity of the covariance function $\covarop$.  As discussed in Section \ref{subsec:bundle}, such regularity condition in the manifold setting is made precise and meaningful by the constructed covariance vector bundle $\mathbb L$ and the covariant derivative $\nabla$ in \eqref{eq:covariant-deriv}.
\begin{assumption}
	\label{ass:covariance}
	The covariance function $\covarop$  is twice differentiable and its second derivatives are continuous.
\end{assumption}

To study the asymptotic properties of the estimator $\hat\covarop$, one of the major challenges that are not encountered in the Euclidean setting of \cite{Zhang2016} or the ambient case of \cite{Dai2019} is to deal with the parallel transport in \eqref{eq:cov-estimate}. 
It turns out that we need to quantify the discrepancy between a tangent vector and the parallelly transported one along a geodesic quadrilateral. We address this issue by the following lemma which may be of independent interest. In particular, the proof of the lemma given in the Supplementary Material utilizes holonomy theory that appears new in statistical literature.
\begin{lemma}\label{lem:c_tildec}
	For a compact subset $\mathcal G\subset\manifold$, there exists a constant $c>0$ depending only on $\mathcal G$, such that for all $p_1,p_2,q_1,q_2,y\in\mathcal G$, \[
	\|\pt_{q_1}^{p_1}\pt_{q_2}^{q_1}\Log_{q_2}y-\pt_{p_2}^{p_1}\Log_{p_2}y\|_{p_1}\leq c(d_{\manifold}(p_1,q_1)+d_{\manifold}(p_2,q_2)).
	\]
\end{lemma}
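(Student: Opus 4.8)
The plan is to compare the two parallel-transported vectors by interpolating through intermediate quantities and bounding each gap, using two ingredients that are uniform over the compact set $\mathcal G$ (which, consistent with the paper's standing setup, we take to lie inside a geodesically convex region free of cut loci, so that the geodesics and transports below are well defined): (i) $x\mapsto\Log_x y$ is a smooth vector field whose covariant derivative is bounded on $\mathcal G$, uniformly in $y\in\mathcal G$; (ii) the holonomy of parallel transport around a geodesic triangle contained in $\mathcal G$ differs from the identity, in operator norm, by at most $\int\!\!\int\|R(\partial_s\sigma,\partial_t\sigma)\|\,\diffop s\,\diffop t$ for any smooth surface $\sigma:[0,1]^2\to\manifold$ spanning the triangle, where $R$ is the Riemann curvature tensor.

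Set $v:=\Log_{q_2}y$, so $\|v\|_{q_2}=d_{\manifold}(q_2,y)\le\mathrm{diam}(\mathcal G)$. Inserting the intermediate vectors $\pt_{q_2}^{p_1}v$ and $\pt_{p_2}^{p_1}\pt_{q_2}^{p_2}v$ in $\tangentspace{p_1}$ and using that parallel transport is a linear isometry, the triangle inequality gives
$$
\|\pt_{q_1}^{p_1}\pt_{q_2}^{q_1}\Log_{q_2}y-\pt_{p_2}^{p_1}\Log_{p_2}y\|_{p_1}\le \mathrm{I}+\mathrm{II}+\mathrm{III},
$$
with $\mathrm{I}=\|\pt_{q_1}^{p_1}\pt_{q_2}^{q_1}v-\pt_{q_2}^{p_1}v\|_{p_1}$, $\mathrm{II}=\|\pt_{q_2}^{p_1}v-\pt_{p_2}^{p_1}\pt_{q_2}^{p_2}v\|_{p_1}$ and $\mathrm{III}=\|\pt_{q_2}^{p_2}v-\Log_{p_2}y\|_{p_2}$ (the last using the isometry $\pt_{p_2}^{p_1}$). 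For $\mathrm{III}$, transport the vector field $\Phi(x):=\Log_x y$ along the minimizing geodesic $\alpha$ from $p_2$ to $q_2$: the curve $s\mapsto\pt_{\alpha(s)}^{p_2}\Phi(\alpha(s))$ has velocity $\pt_{\alpha(s)}^{p_2}\nabla_{\alpha^\prime(s)}\Phi$ of norm at most $\sup_{\mathcal G}\|\nabla\Phi\|=:c_1$, whence $\mathrm{III}\le c_1\,d_{\manifold}(p_2,q_2)$.

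The terms $\mathrm{I}$ and $\mathrm{II}$ are the crux. Since $\pt_{q_1}^{p_1}\pt_{q_2}^{q_1}(\pt_{q_2}^{p_1})^{-1}$ is an endomorphism of $\tangentspace{p_1}$, namely the holonomy around the geodesic triangle with vertices $p_1,q_1,q_2$, we have $\mathrm{I}\le\|v\|_{q_2}\,\|\mathrm{hol}-\mathrm{Id}\|_{\mathrm{op}}$. Span that triangle by the minimizing geodesics from the points of its \emph{short} side $[p_1,q_1]$ to the opposite vertex $q_2$, producing a smooth $\sigma:[0,1]^2\to\manifold$; then $|\partial_t\sigma|\le\mathrm{diam}(\mathcal G)$ (each $\sigma(s,\cdot)$ is a geodesic ending at $q_2$), while $\partial_s\sigma(s,\cdot)$ is, for each $s$, a Jacobi field with endpoint norms at most $d_{\manifold}(p_1,q_1)$ (at $t=0$) and $0$ (at $t=1$), so $|\partial_s\sigma|\le c\,d_{\manifold}(p_1,q_1)$ by a Jacobi-field comparison uniform over $\mathcal G$. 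By ingredient (ii), $\|\mathrm{hol}-\mathrm{Id}\|_{\mathrm{op}}\le \sup_{\mathcal G}\|R\|\cdot c\,\mathrm{diam}(\mathcal G)\,d_{\manifold}(p_1,q_1)$, hence $\mathrm{I}\lesssim d_{\manifold}(p_1,q_1)$ with a constant depending only on $\mathcal G$. Identically, $\pt_{p_2}^{p_1}\pt_{q_2}^{p_2}(\pt_{q_2}^{p_1})^{-1}$ is the holonomy around the triangle $p_1,p_2,q_2$, whose short side is now $[p_2,q_2]$, giving $\mathrm{II}\lesssim d_{\manifold}(p_2,q_2)$. Summing the three estimates yields the lemma.

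The main obstacle is making ingredient (ii) quantitative and uniform: one must check that the spanning surface $\sigma$ stays in the region where minimizing geodesics are unique and conjugate-point-free, and that the Jacobi-field estimate $|\partial_s\sigma|\le c\,(\text{endpoint norms})$ holds with a constant depending only on $\mathcal G$ — both consequences of the compactness of $\mathcal G$ together with bounds on the curvature, but requiring care. The qualitative version of (ii) is classical holonomy theory (Ambrose–Singer); its quantitative, uniform form is the part "relatively new in statistics" on which the lemma rests.
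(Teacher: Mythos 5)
Your proposal is correct and follows essentially the same route as the paper: both arguments reduce the quadrilateral discrepancy to two geodesic-triangle holonomies, each $O(\text{length of the short side})$ uniformly over the compact set, plus a term comparing $\Log_{q_2}y$ with the parallel transport of $\Log_{p_2}y$ that is Lipschitz in $d_{\manifold}(p_2,q_2)$. The only differences are organizational — the paper conjugates by $\pt_{p_2}^{q_2}$ so as to work in $\tangentspace{q_2}$ and factors the quadrilateral holonomy as a product of the two triangle holonomies, and it cites \cite{Nichols2016} and Theorem 3 of \cite{Pennec2019} for precisely the two quantitative ingredients (holonomy bound and log-field comparison) that you sketch from curvature/Jacobi-field and covariant-derivative arguments.
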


With the above regularity condition and lemma, the following theorem establishes the point-wise convergence rate of $\hat\covarop$.
\begin{theorem}
	\label{the:covariance}
	Suppose that Assumptions~\ref{ass:muexist},~\ref{ass:Xnearmu},~\ref{ass:muhatexist},~\ref{ass:basic},~\ref{ass:mu} and~\ref{ass:covariance} hold. Under either of Assumptions~\ref{ass:random:design}, \ref{ass:deterministic-design} and \ref{ass:hybrid}, if  $h_{\mu}\rightarrow 0$, $h_{\covarop}=O(h_{\mu})$, and $\min\{nmh_{\mu},nm^{2}h_{\covarop}^{2}\}\rightarrow \infty$, then for any fixed $s,t\in\tdomain$,
	\begin{equation}\label{eq:cov-pointwise}
	\left\|\vpt_{(\hat{\mu}(s),\hat{\mu}(t))}^{(\mu(s),\mu(t))}\hat{\covarop}(s,t)-\covarop(s,t)\right\|_{G(\mu(s),\mu(t))}^2=\Op\left(h_{\mu}^{4}+h_{\covarop}^{4}+\frac{1}{n}+\frac{1}{nmh_{\mu}}+\frac{1}{nm^{2}h_{\covarop}^{2}}\right).
	\end{equation}
\end{theorem}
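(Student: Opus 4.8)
The plan is to mimic the classical bias--variance decomposition for two-dimensional local linear smoothing of the raw covariance \citep{Yao2005a,Zhang2016}, but carried out in the fiber $\mathbb{L}(\hat\mu(s),\hat\mu(t))$ and then transported to the target fiber $\mathbb{L}(\mu(s),\mu(t))$. First I would replace the data-dependent objects by their population analogues in three successive stages. Stage one: define an oracle estimator $\check\covarop(s,t)$ by running the same local linear fit but with $\hat\mu$ replaced by the true mean $\mu$ everywhere (so that the raw observations $\check\covarop_{i,jk}=\Log_{\mu(T_{ij})}Y_{ij}\otimes\Log_{\mu(T_{ik})}Y_{ik}$ all lie in known fibers and are transported by the true parallel transport $\vpt$). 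Stage two: express $\check\covarop(s,t)$ in closed form as a weighted average in $\mathbb{L}(\mu(s),\mu(t))$ by solving the $3\times3$ (block) normal equations, exactly as in the Euclidean case, using the bundle metric $G$ of \eqref{eq:bundle-metric} to define the quadratic form; because $G$ makes each fiber a Hilbert space and the minimization in \eqref{eq:cov-estimate} is a genuine least-squares problem on that Hilbert space, the normal equations have the usual form with the design matrix built from $(T_{ij}-s,T_{ik}-t)$ and kernel weights $K_{h_\covarop}(s-T_{ij})K_{h_\covarop}(t-T_{ik})$. Stage three: analyze $\check\covarop(s,t)$ by a standard $U$-statistic / method-of-moments argument: the denominator terms concentrate (using Assumption \ref{ass:basic}(d)--(f) on the density of $T$ and the kernel) and the numerator splits into a bias part and a stochastic part.

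For the bias part I would Taylor-expand the transported population covariance along geodesics. Concretely, write $m(u,v):=\vpt_{(\mu(u),\mu(v))}^{(\mu(s),\mu(t))}\covarop(u,v)\in\mathbb{L}(\mu(s),\mu(t))$; by Assumption \ref{ass:covariance} and the covariant-derivative machinery of Theorem \ref{thm:cov-deriv}, the map $(u,v)\mapsto m(u,v)$ is twice continuously differentiable into the fixed Hilbert space $\mathbb{L}(\mu(s),\mu(t))$, so a second-order Taylor expansion at $(s,t)$ gives an $O(h_\covarop^2)$ bias in $\|\cdot\|_{G}$-norm after the local linear fit kills the first-order term. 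Here one must check that $\vpt_{(\hat\mu(T_{ij}),\hat\mu(T_{ik}))}^{(\hat\mu(s),\hat\mu(t))}\hat\covarop_{i,jk}$, transported further to $\mathbb{L}(\mu(s),\mu(t))$ via $\vpt_{(\hat\mu(s),\hat\mu(t))}^{(\mu(s),\mu(t))}$, equals, up to the errors controlled below, $\vpt_{(\mu(T_{ij}),\mu(T_{ik}))}^{(\mu(s),\mu(t))}\check\covarop_{i,jk}$; this is precisely where Lemma \ref{lem:c_tildec} enters, since the composition of parallel transports around the geodesic quadrilateral with vertices $\mu(T_{ij}),\hat\mu(T_{ij})$ (and likewise the second coordinate) differs from the direct transport by $O(d_\manifold(\mu(T_{ij}),\hat\mu(T_{ij}))+d_\manifold(\mu(T_{ik}),\hat\mu(T_{ik})))$. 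For the stochastic part, conditional on the design points the numerator is a sum over $i$ of within-subject contributions, and a variance computation analogous to Zhang and Wang (2016) yields the rate $\frac1n+\frac1{nm h_\mu}+\frac1{nm^2 h_\covarop^2}$: the $\frac1n$ term comes from the between-subject covariance of $\Log_\mu X$, the $\frac1{nm h_\mu}$ term from the diagonal pairs $j=k$ that are excluded from \eqref{eq:cov-estimate} but re-enter through the interaction with the mean estimate and the noise, and the $\frac1{nm^2 h_\covarop^2}$ term from the off-diagonal kernel-localized pairs.

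Finally I would handle the replacement of $\hat\mu$ by $\mu$. Using Proposition \ref{pro:mu_hatmu_local}, on the event $\{|T_{ij}-s|\le h_\covarop,\ |T_{ik}-t|\le h_\covarop\}$ (the only terms that matter, since $h_\covarop=O(h_\mu)$) we have $\sup d_\manifold^2(\mu(\tau),\hat\mu(\tau))=O_p(h_\mu^4+\frac1n+\frac1{nmh_\mu})$ over the relevant window, and this propagates through three Lipschitz-type bounds: (i) $\Log_{\hat\mu(T_{ij})}Y_{ij}-\pt_{\mu(T_{ij})}^{\hat\mu(T_{ij})}\Log_{\mu(T_{ij})}Y_{ij}=O(d_\manifold(\hat\mu(T_{ij}),\mu(T_{ij})))$ on the compact set $\mathcal K$ of Assumption \ref{ass:basic}(b), which transfers to the tensor products $\hat\covarop_{i,jk}$ versus $\check\covarop_{i,jk}$ after parallel transport; (ii) Lemma \ref{lem:c_tildec} for the parallel-transport mismatch as above; (iii) smoothness of the bundle metric $G$ (Theorem \ref{thm:bundle-metric}) so that $\|\cdot\|_{G(\hat\mu(s),\hat\mu(t))}$ and $\|\cdot\|_{G(\mu(s),\mu(t))}$ agree up to the same order after transport. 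Squaring and collecting, these perturbations contribute $O_p(h_\mu^4+\frac1n+\frac1{nmh_\mu})$, which is dominated by the stated rate, so $\|\vpt_{(\hat\mu(s),\hat\mu(t))}^{(\mu(s),\mu(t))}\hat\covarop(s,t)-\vpt_{(\hat\mu(s),\hat\mu(t))}^{(\mu(s),\mu(t))}\check\covarop(s,t)\|_G$ and the difference between the two versions of the metric are both negligible. The main obstacle I expect is Stage three together with the $\hat\mu$-replacement: bookkeeping the parallel-transport errors so that they combine cleanly with the least-squares algebra without hidden dependence on the (random) fiber $\mathbb{L}(\hat\mu(s),\hat\mu(t))$ in which the optimization takes place. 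Lemma \ref{lem:c_tildec} and the isometry property of $\vpt$ in Theorem \ref{thm:bundle-metric} are exactly the tools that make this bookkeeping go through; the remaining variance computation is then a routine, if lengthy, adaptation of the Euclidean sparse-FDA arguments.
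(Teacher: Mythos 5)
Your proposal follows essentially the same route as the paper's proof: writing the transported local linear fit in closed form (exploiting the isometry of $\vpt$ from Theorem \ref{thm:bundle-metric}), splitting the numerator into a mean-estimation-error part controlled by Lemma \ref{lem:c_tildec} together with the local uniform rate of Proposition \ref{pro:mu_hatmu_local}, and an oracle part whose bias is an $O(h_{\covarop}^2)$ covariant-derivative Taylor expansion and whose variance is the Zhang--Wang-style pair-counting computation. The only minor inaccuracy is attributing the $1/(nmh_{\mu})$ term to excluded diagonal pairs; in the paper it arises purely from the propagated error of $\hat\mu$ (your own final accounting of the $\hat\mu$-replacement gets this right), so the argument is sound.
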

The rate in the above theorem matches the point-wise rate in the Euclidean setting of \cite{Zhang2016} in the case of $m_i=m$. Unlike  \cite{Zhang2016} which assumes that the mean function is known in their analysis, we do not need such assumption thanks to the local uniform rate of the mean function stated in Proposition \ref{prop:random:mu:rate:local}. In our analysis, the local uniform rate can not be replaced with the global uniform rate in Proposition \ref{prop:random:mu:sup} without introducing an additional $\log n$ factor. 
Although the condition $h_{\covarop}=O(h_{\mu})$ is required in order to utilize  Proposition \ref{prop:random:mu:rate:local}, it does not limit the convergence rate, as a proper choice of $h_\mu$ and $h_{\covarop}$ leads to the following rates that still match the rates of \cite{Zhang2016}.
\begin{corollary}
	\label{cor:covariance}
	Assume the conditions of Theorem \ref{the:covariance}.
	\begin{enumerate}[label=\textup{(\alph*)}]
		\item When $m\asympeq n^{1/4}$ or $m\gg n^{1/4}$, with $h_{\mu}\asymp h_{\covarop} \asymp n^{-1/4}$, one has 
		$$
		\left\|\vpt_{(\hat{\mu}(s),\hat{\mu}(t))}^{(\mu(s),\mu(t))}\hat{\covarop}(s,t)-\covarop(s,t)\right\|^2_{G_{(\mu(s),\mu(t))}}=\Op\left(\frac 1 n\right).
		$$
		\item When $m\ll n^{1/4}$, with $h_{\mu}\asymp h_{\covarop}\asymp n^{-1/6}m^{-1/3}$, one has
		$$
		\left\|\vpt_{(\hat{\mu}(s),\hat{\mu}(t))}^{(\mu(s),\mu(t))}\hat{\covarop}(s,t)-\covarop(s,t)\right\|^2_{G_{(\mu(s),\mu(t))}}=\Op\left(\frac{1}{n^{2/3}m^{4/3}}\right).
		$$ 
	\end{enumerate}
\end{corollary}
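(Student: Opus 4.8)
The plan is to deduce Corollary~\ref{cor:covariance} directly from Theorem~\ref{the:covariance} by substituting the prescribed bandwidths into the rate
$$
h_{\mu}^{4}+h_{\covarop}^{4}+\frac1n+\frac1{nmh_{\mu}}+\frac1{nm^{2}h_{\covarop}^{2}}
$$
and identifying which of the five summands dominates. First I would check that the prescribed choices meet the hypotheses of Theorem~\ref{the:covariance}: that $h_{\mu}\to0$, $h_{\covarop}=O(h_{\mu})$, and $\min\{nmh_{\mu},\,nm^{2}h_{\covarop}^{2}\}\to\infty$. In both regimes $h_{\mu}\asymp h_{\covarop}$, so $h_{\covarop}=O(h_{\mu})$ is immediate; since $m\geq1$, the choices $n^{-1/4}$ and $n^{-1/6}m^{-1/3}$ both tend to $0$, while $nmh_{\mu}$ and $nm^{2}h_{\covarop}^{2}$ are bounded below by fixed positive powers of $n$, so the divergence requirement holds.

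For part (a), take $h_{\mu}\asymp h_{\covarop}\asymp n^{-1/4}$, so $h_{\mu}^{4}\asymp h_{\covarop}^{4}\asymp n^{-1}$. Using $m\gtrsim n^{1/4}$ one gets $nmh_{\mu}\gtrsim n\cdot n^{1/4}\cdot n^{-1/4}=n$ and $nm^{2}h_{\covarop}^{2}\gtrsim n\cdot n^{1/2}\cdot n^{-1/2}=n$, hence $1/(nmh_{\mu})$ and $1/(nm^{2}h_{\covarop}^{2})$ are both $O(1/n)$. Every summand is then $O(1/n)$, which is the claimed rate.

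For part (b), take $h_{\mu}\asymp h_{\covarop}\asymp n^{-1/6}m^{-1/3}$, so $h_{\mu}^{4}\asymp h_{\covarop}^{4}\asymp n^{-2/3}m^{-4/3}$ and $nm^{2}h_{\covarop}^{2}\asymp n^{2/3}m^{4/3}$, giving $1/(nm^{2}h_{\covarop}^{2})\asymp n^{-2/3}m^{-4/3}$. It remains to verify that the remaining summands $1/n$ and $1/(nmh_{\mu})$ are dominated: $nmh_{\mu}\asymp n^{5/6}m^{2/3}$, and the regime assumption $m\ll n^{1/4}$ gives $m^{2/3}\ll n^{1/6}$, hence $n^{5/6}m^{2/3}\gg n^{2/3}m^{4/3}$; likewise $m^{4/3}\ll n^{1/3}$ yields $n\gg n^{2/3}m^{4/3}$. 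Thus the bottleneck term is $n^{-2/3}m^{-4/3}$, as asserted.

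Since the corollary is a consequence of the already-established Theorem~\ref{the:covariance}, there is no substantive obstacle; the only point requiring care is the term-by-term comparison in part (b), where the regime $m\ll n^{1/4}$ must be invoked to confirm that the squared-bias terms and $1/(nm^{2}h_{\covarop}^{2})$ dominate both the $1/n$ and the $1/(nmh_{\mu})$ contributions.
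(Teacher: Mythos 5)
Your proposal is correct and follows the only natural route, which is also the paper's (implicit) one: substitute the prescribed bandwidths into the rate of Theorem~\ref{the:covariance}, verify its hypotheses, and identify the dominant term, using $m\ll n^{1/4}$ in part (b) to confirm that $1/n$ and $1/(nmh_{\mu})$ are dominated by $n^{-2/3}m^{-4/3}$. The term-by-term comparisons are all accurate, so nothing is missing.
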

Like the Euclidean case, a phase transition is observed at $m\asymp n^{1/4}$. With a proper choice of $h_\mu$ and $h_{\covarop}$, if $m$ grows at least as fast as $n^{1/4}$, it does not impact the convergence rate that is at a parametric order of magnitude, i.e., $n^{-1/2}$. Otherwise, the sampling rate $m$ becomes an integrated part of the convergence rate of $\hat\covarop$. In particular, when $m\ll n^{1/4}$, the choice  $h_\mu\asympeq n^{-1/6}m^{-1/3}$ is required to respect the condition $h_{\covarop}=O(h_\mu)$. This choice is strictly larger than the optimal choice $h_\mu\asympeq (nm)^{-1/5}$ that is implied by  Proposition \ref{prop:random:mu:sup} in the case of $m\ll n^{1/4}$. This suggests that oversmoothing in the mean function estimation may be needed in order to reach the optimal point-wise rate of the covariance estimator when $m\ll n^{1/4}$. It is interesting to note that, in the literature if a lower-dimensional estimate depends on a higher-dimensional one, undersmoothing the latter often helps the former to attain a better rate. For instance, undersmoothing the two-dimensional covariance surface estimate leads to a better rate of the  one-dimensional eigenfunction estimates \citep{Hall2006a}. In contrast, the phenomenon in our case is reversed: the two-dimensional covariance function estimate depends on the one-dimensional mean estimate, thus requires oversmoothing the latter instead.

The following results establish the uniform convergence rate of the estimator $\hat\covarop$, where the condition $\min\{nh_\mu,nh_{\covarop}^2\}\asympgt 1$ is only needed for the deterministic design.
\begin{theorem}
	\label{the:covariance:uniform}
	Suppose that Assumptions~\ref{ass:muexist},~\ref{ass:Xnearmu},~\ref{ass:muhatexist},~\ref{ass:basic},~\ref{ass:mu} and~\ref{ass:covariance} hold. Under either of Assumptions~\ref{ass:random:design}, \ref{ass:deterministic-design} and \ref{ass:hybrid}, if $\max\{h_{\mu},h_{\covarop}\}\rightarrow 0$, $\min\{nh_\mu,nh_{\covarop}^2\}\asympgt 1$ and $\min\{nmh_{\mu},nm^{2}h_{\covarop}^{2}\}/\log n\rightarrow \infty$, then
	\begin{equation}\label{eq:cov-uniform}
	\sup_{(s,t)\in\tdomain^2}\left\|\vpt_{(\hat{\mu}(s),\hat{\mu}(t))}^{(\mu(s),\mu(t))}\hat{\covarop}(s,t)-\covarop(s,t)\right\|^2_{G_{(\mu(s),\mu(t))}}=\Op\left(
	h_{\mu}^{4}+h_{\covarop}^{4}+
	\frac{\log n}{n}+\frac{\log n}{nmh_{\mu}}+\frac{\log n}{nm^{2}h_{\covarop}^{2}}\right).
	\end{equation}
\end{theorem}
\begin{corollary}
	\label{cor:covariance_uniform}
	Assume the conditions of Theorem \ref{the:covariance:uniform}.
	\begin{enumerate}[label=\textup{(\alph*)}]
		\item  When $m\asympeq n^{1/4}$ or $m\gg n^{1/4}$, with $h_{\mu}\asymp h_{\covarop} \asymp n^{-1/4}$, one has 
		$$
		\sup_{(s,t)\in\tdomain^2}\left\|\vpt_{(\hat{\mu}(s),\hat{\mu}(t))}^{(\mu(s),\mu(t))}\hat{\covarop}(s,t)-\covarop(s,t)\right\|^2_{G_{(\mu(s),\mu(t))}}=\Op\left(\frac{\log n }{n}\right).
		$$
		\item When $m\ll n^{1/4}$, with $h_{\mu}\asymp {n}^{-1/5}m^{-1/5}(\log n)^{1/5}$ and $h_{\covarop}\asymp n^{-1/6}m^{-1/3}(\log n)^{1/6}$, one has  
		$$
		\sup_{(s,t)\in\tdomain^2}\left\|\vpt_{(\hat{\mu}(s),\hat{\mu}(t))}^{(\mu(s),\mu(t))}\hat{\covarop}(s,t)-\covarop(s,t)\right\|^2_{G_{(\mu(s),\mu(t))}}=\Op\left(\frac{(\log n)^{2/3}}{n^{2/3}m^{4/3}}\right).
		$$ 
	\end{enumerate}
\end{corollary}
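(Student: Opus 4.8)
The plan is to obtain the corollary directly from Theorem~\ref{the:covariance_uniform}: substitute the prescribed bandwidths into the uniform rate \eqref{eq:cov-uniform} and, in each sampling regime, perform two routine checks, namely (i) that the hypotheses $h_{\mu}\to0$, $nmh_{\mu}/\log n\to\infty$ and $nm^{2}h_{\covarop}^{2}/\log n\to\infty$ of that theorem hold, and (ii) that each of the five summands $h_{\mu}^{4}$, $h_{\covarop}^{4}$, $\tfrac{\log n}{n}$, $\tfrac{\log n}{nmh_{\mu}}$ and $\tfrac{\log n}{nm^{2}h_{\covarop}^{2}}$ is of the advertised order, so that their maximum---hence their sum---is as claimed.

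For part~(a) I would use $h_{\mu}\asymp h_{\covarop}\asymp n^{-1/4}$ with $m\gtrsim n^{1/4}$. Then $h_{\mu}^{4}\asymp h_{\covarop}^{4}\asymp n^{-1}$; the side conditions are satisfied because $nmh_{\mu}=n^{3/4}m\gtrsim n$ and $nm^{2}h_{\covarop}^{2}=n^{1/2}m^{2}\gtrsim n$; and the two variance terms are $\tfrac{\log n}{nmh_{\mu}}=\tfrac{\log n}{n^{3/4}m}=O(n^{-1}\log n)$ and $\tfrac{\log n}{nm^{2}h_{\covarop}^{2}}=\tfrac{\log n}{n^{1/2}m^{2}}=O(n^{-1}\log n)$, again invoking $m\gtrsim n^{1/4}$. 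All five summands are $O(n^{-1}\log n)$, giving the claim.

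For part~(b) I would use $h_{\mu}\asymp(nm)^{-1/5}(\log n)^{1/5}$ and $h_{\covarop}\asymp n^{-1/6}m^{-1/3}(\log n)^{1/6}$. A short calculation gives $h_{\mu}^{4}\asymp\tfrac{(\log n)^{4/5}}{(nm)^{4/5}}\asymp\tfrac{\log n}{nmh_{\mu}}$ and $h_{\covarop}^{4}\asymp\tfrac{(\log n)^{2/3}}{n^{2/3}m^{4/3}}\asymp\tfrac{\log n}{nm^{2}h_{\covarop}^{2}}$, so these are exactly the bandwidths that balance squared bias against variance in the mean-smoothing and covariance-smoothing steps, respectively, and the hypotheses follow from $nmh_{\mu}/\log n=(nm/\log n)^{4/5}\to\infty$ and $nm^{2}h_{\covarop}^{2}/\log n=(nm^{2}/\log n)^{2/3}\to\infty$. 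It then remains to check that in the sparse regime $m\ll n^{1/4}$ the covariance order $\tfrac{(\log n)^{2/3}}{n^{2/3}m^{4/3}}$ dominates both the mean order $\tfrac{(\log n)^{4/5}}{(nm)^{4/5}}$ and the intercept order $\tfrac{\log n}{n}$; both comparisons reduce to the single power-counting inequality $m^{4}\log n=O(n)$, valid when $m\ll n^{1/4}$. Collecting the bounds yields the stated $O_{p}\!\big(\tfrac{(\log n)^{2/3}}{n^{2/3}m^{4/3}}\big)$ rate.

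Since the whole argument is bandwidth bookkeeping layered on Theorem~\ref{the:covariance_uniform}, I do not anticipate a substantive obstacle; the only step requiring attention is the exponent arithmetic in part~(b) certifying that the advertised term is the largest of the five summands. It is worth recording that the uniform Theorem~\ref{the:covariance_uniform}, unlike the pointwise Theorem~\ref{the:covariance}, does not require $h_{\covarop}=O(h_{\mu})$, and it is precisely this that permits the choice of $h_{\covarop}$ in part~(b), which in the sparse regime is typically of larger order than $h_{\mu}$.
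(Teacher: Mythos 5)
Your proposal is correct and is essentially the paper's (implicit) argument: the corollary is obtained by substituting the stated bandwidths into the rate of Theorem~\ref{the:covariance_uniform}, verifying its side conditions, and checking by exponent arithmetic which of the five summands dominates, with the key inequality in part~(b) being $m^{4}\log n=O(n)$ exactly as you identify. Your closing observation that the absence of the constraint $h_{\covarop}=O(h_{\mu})$ is what permits $h_{\covarop}\gg h_{\mu}$ in the sparse regime also matches the paper's own discussion following the corollary.
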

These rates  again match the uniform rates in \cite{Zhang2016}. They also coincide with the rates\footnote{Note that the extra term $1/(nmh_{\covarop})$ in \cite{Dai2018} is dominated by $1/n+1/(nm^2h^2_{\covarop})$ due to the inequality of arithmetic and geometric means, i.e., $\sqrt{ab}\leq (a+b)/2$.}  in \cite{Dai2018}. It is interesting to see that, when $m\ll n^{1/4}$, the choice of $h_\mu$ in the corollary is the same as the optimal choice implied by Proposition \ref{prop:random:mu:sup}, which suggests that no oversmoothing is needed in order to reach the optimal uniform rate for the covariance estimator $\hat\covarop$. This is because, the local uniform result of Proposition \ref{prop:random:mu:rate:local} and thus the condition $h_{\covarop}=O(h_\mu)$ are not required, as the role of Proposition \ref{prop:random:mu:rate:local} in the analysis is now played by Proposition \ref{prop:random:mu:sup}.

\section{Simulation studies}\label{sec:simulation}
We consider three different manifolds for illustrating the numerical properties of the proposed covariance estimator \eqref{eq:cov-estimate} in Section \ref{sec:estimation}; the numerical performance of the mean estimator can be found in \cite{Dai2019}. Namely, they are the two-dimensional unit sphere $\mathbb S^{2}$, the manifold $\mathrm{Sym}_{LC}^+$ of symmetric positive-definite $2\times 2$ matrices with the Log-Cholesky metric \citep{Lin2019Riemannian}, and the manifold $\mathrm{Sym}_{AF}^+$ of symmetric positive-definite $2\times 2$ matrices with the affine-invariant metric \citep{Moakher2005}, representing manifolds of positive, zero and negative sectional curvature, respectively. Note that although  $\mathrm{Sym}_{LC}^+$ and $\mathrm{Sym}_{AF}^+$ share the same collection of matrices, they are endowed with different Riemannian metric tensors and thus have fundamentally different Riemannian geometry. 
We set $\tdomain=[0,1]$. The sampling rate $m_i$ is randomly sampled from $\text{Poisson}(m)+2$, where $\text{Poisson}(m)$ is a Poisson distribution with parameter $m$. Conditional on $m_i$, the time points $T_{i1},\ldots,T_{im_i}$ are i.i.d. sampled from the uniform distribution  $\text{Uniform}(0,1)$. The random process $X$ and its mean and covariance functions are described below.

\paragraph*{\textbf{Sphere $\mathbb S^2$}}  We parameterize $\mathbb 
S^2=\{(x,y,z)\in\real^3:x^2+y^2+z^2=1\}$ by the polar coordinate system 
\begin{equation}
\label{equ:simu:polar}
x(u,v) = \cos(u)\sin(v),\quad 
y(u,v) = \cos(u)\cos(v), \quad 
z(u,v) = \sin(u)
\end{equation}
for the latitude $u\in (-\pi/2,\pi/2)$ and longitude $v\in [0,2\pi)$.
This coordinate system also gives rise to a local chart $\phi:U\rightarrow (-\pi/2,\pi/2)\times [0,2\pi)$ on $V=\mathbb S^2\backslash\{(0,0,-1),(0,0,1)\}$. Let $B_1(t)=\frac{\partial\phi}{\partial u}$ and $B_2(t)=\frac{\partial\phi}{\partial v}$. The random process $X$ is then given by 
$$
X(t)=\Exp_{\mu(t)}\big(tZ_{1}B_{1}(t)+tZ_{2}B_{2}(t)\big)
$$
with $Z_{1},Z_{2}\stackrel{i.i.d.}{\sim}  \mathrm{Uniform}(-0.1,0.1)$.  The mean curve $\mu$ of $X$ is $\mu(t)=\phi(0,\pi t/2)=(\sin(\pi t/2),\cos(\pi t/2),0)$, which lies on the equator. The covariance function is $\covarop(s,t)=\frac{st}{300} \mathbf I_2$ under the frame $(B_1,B_2)$, where $\mathbf I_2$ denotes the $2\times 2$ identity matrix. The contaminated observations are 
\[
Y_{ij}=\Exp_{\mu(T_{ij})}\{(T_{ij}Z_{1i}+\upsilon_{ij1})B_1(T_{ij})+(T_{ij}Z_{2i}+\upsilon_{ij2})B_2(T_{ij})\},
\]
where $Z_{1i},Z_{2i}\stackrel{i.i.d.}{\sim}  \mathrm{Uniform}(-0.1,0.1)$ and $\upsilon_{ij1},\upsilon_{ij2}\stackrel{i.i.d.}{\sim} \text{Uniform}(-a,a)$ with $a>0$ chosen to make $\text{SNR}=5$ defined by
\begin{equation}
\label{equ:snr}
\text{SNR}:=\frac{\expect\int_{\tdomain}\|\Log_{\mu(t)}X(t)\|^{2}_{\mu(t)}\diffop t}{\expect\int_{\tdomain}\|\varepsilon(t)\|^{2}_{\mu(t)}\diffop t}.
\end{equation}


\paragraph*{\textbf{Manifold $\mathrm{Sym}_{LC}^+$}}  
We parameterize $\mathrm{Sym}_{LC}^+$ by the chart 
$$
\phi: (u,v,w)\rightarrow 
\left(
\begin{aligned}
&e^{2u} & we^{u} \\
&we^{u} & w^{2}+e^{2v} 
\end{aligned}
\right)
$$
which induces the orthogonal frame formed by  $B_1(t)=\frac{\partial\phi}{\partial u}$,
$B_2(t)=\frac{\partial\phi}{\partial v}$ and $B_3(t)=\frac{\partial\phi}{\partial w}$.
The random process $X$ is set to be 
$$
\begin{aligned}
X(t)=&\Exp_{\mu(t)}\big(tZ_{1}B_{1}(t)+tZ_{2}B_{2}(t)+tZ_{3}B_{3}(t)\big)=\left(
\begin{aligned}
&e^{t+tZ_{1}} & 0 \\
&t+tZ_{3} & e^{t+tZ_{2}} 
\end{aligned}
\right)
\left(
\begin{aligned}
&e^{t+tZ_{1}} & t+tZ_{3} \\
&0 & e^{t+tZ_{2}} 
\end{aligned}
\right)
\end{aligned}
$$
with $Z_{1},Z_{2},Z_{3}\stackrel{i.i.d.}{\sim} \text{Uniform}(-0.1,0.1)$.
The mean curve $\mu$ is a geodesic with 
$$
\mu(t)=\phi(t,t,t)=
\left(
\begin{aligned}
&e^{2t} & te^{t} \\
&te^{t} & t^{2}+e^{2t} 
\end{aligned}
\right)
$$
and the covariance function is $\covarop(s,t)=\frac{st}{300}\mathbf{I}_{3}$ under the frame $(B_{1},B_{2},B_{3})$. With $\varepsilon(T_{ij})=\upsilon_{ij1}B_{1}(T_{ij})+\upsilon_{ij2}B_{2}(T_{ij})+\upsilon_{ij3}B_{3}(T_{ij})\in \tangentspace{\mu(T_{ij})}$, the contaminated observations are 
$$
\begin{aligned}
Y_{ij}
=&\left(
\begin{aligned}
&e^{T_{ij}+T_{ij}Z_{1i}+\upsilon_{ij1}} & 0 \\
&T_{ij}+T_{ij}Z_{3i}+\upsilon_{ij3} & e^{T_{ij}+T_{ij}Z_{2i}+\upsilon_{ij2}} 
\end{aligned}
\right)
\left(
\begin{aligned}
&e^{T_{ij}+T_{ij}Z_{1i}+\upsilon_{ij1}} & T_{ij}+T_{ij}Z_{3i}+\upsilon_{ij3} \\
&0 & e^{T_{ij}+T_{ij}Z_{2i}+\upsilon_{ij2}} 
\end{aligned}
\right),
\end{aligned}
$$
where $Z_{1i},Z_{2i},Z_{3i}\stackrel{i.i.d.}{\sim} \text{Uniform}(-0.1,0.1)$
and $\upsilon_{ij1},\upsilon_{ij2},\upsilon_{ij3}\stackrel{i.i.d.}{\sim} \text{Uniform}(-a,a)$ with $a>0$ set to satisfy $\text{SNR}=5$ defined in \eqref{equ:snr}.

\paragraph*{\textbf{Manifold $\mathrm{Sym}_{AF}^+$}}  
We parameterize $\mathrm{Sym}_{AF}^+$ by the chart 
$$
\phi: (u,v,w)\rightarrow 
\left(
\begin{aligned}
&e^{u} & w \\
&w & e^{v} 
\end{aligned}
\right)
$$
which gives rise to the frame formed by  $B_1(t)=\frac{\partial\phi}{\partial u}$,
$B_2(t)=\frac{\partial\phi}{\partial v}$ and $B_3(t)=\frac{\partial\phi}{\partial w}$.
The random process $X(t)$ is set to 
\begin{align*}
X(t)=&
\left(
\begin{aligned}
& \frac{1}{4}e^{t+tZ_{1}}+\frac{3}{4}e^{t+tZ_{2}} 
&\quad& 
\frac{\sqrt{3}}{4}e^{t+tZ_{1}}-\frac{\sqrt{3}}{4}e^{t+tZ_{2}}\\
&\frac{\sqrt{3}}{4}e^{t+tZ_{1}}-\frac{\sqrt{3}}{4}e^{t+tZ_{2}}
&\quad&
\frac{3}{4}e^{t+tZ_{1}}+\frac{1}{4}e^{t+tZ_{2}}\\
\end{aligned}
\right),
\end{align*}
for $Z_1,Z_2\stackrel{i.i.d.}{\sim} \text{Uniform}(-0.1,0.1)$. The mean function is $\mu(t)=e^{t}\mathbf{I}_{2}$ while the covariance function is $\covarop(s,t)=\text{diag}\{st/300,st/300,0\}$ under the frame $(B_{1},B_{2},B_{3})$, providing an illustration on covariance structure of non-full rank. With  $\varepsilon(T_{ij})=\upsilon_{ij1}B_{1}(T_{ij})+\upsilon_{ij2}B_{2}(T_{ij})\in \tangentspace{\mu(T_{ij})}$, the contaminated observations are 
\begin{align*}
Y_{ij}
=&
\left(
\begin{aligned}
& \frac{1}{4}e^{T_{ij}+T_{ij}Z_{1i}+\upsilon_{ij1}}+\frac{3}{4}e^{T_{ij}+T_{ij}Z_{2i}+\upsilon_{ij2}} 
&\quad& 
\frac{\sqrt{3}}{4}e^{T_{ij}+T_{ij}Z_{1i}+\upsilon_{ij1}}-\frac{\sqrt{3}}{4}e^{T_{ij}+T_{ij}Z_{2i}+\upsilon_{ij2}}\\
&\frac{\sqrt{3}}{4}e^{T_{ij}+T_{ij}Z_{1i}+\upsilon_{ij1}}-\frac{\sqrt{3}}{4}e^{T_{ij}+T_{ij}Z_{2i}+\upsilon_{ij2}}
&\quad&
\frac{3}{4}e^{T_{ij}+T_{ij}Z_{1i}+\upsilon_{ij1}}+\frac{1}{4}e^{T_{ij}+T_{ij}Z_{2i}+\upsilon_{ij2}}\\
\end{aligned}
\right),
\end{align*}
where $Z_{1i},Z_{2i},Z_{3i}\stackrel{i.i.d.}{\sim} \text{Uniform}(-0.1,0.1)$
and $\upsilon_{ij1},\upsilon_{ij2}\stackrel{i.i.d.}{\sim} \text{Uniform}(-a,a)$ with $a>0$ set to satisfy $\text{SNR}=5$ defined in \eqref{equ:snr}.


We consider different sample sizes and sampling rates, namely, $n=100,200,400$ and $m=5,10,20,30$. Each simulation is repeated independently 100 times. The kernel adopted is the tricube kernel defined by $K(u)=70(1-|u|^3)^3/81$, and the bandwidths $h_\mu$ and $h_{\covarop}$ are selected by the two-fold cross-validation procedure described in Section \ref{sec:estimation}. Estimation quality is measured by relative mean uniform integrated error (rMUIE) and relative root mean integrated squared error (rRMISE), defined by 
\begin{equation}
\label{equ:rmise}
\begin{aligned}
\text{rMUIE} & :=\frac{\expect\sup_{s,t\in\tdomain}\|\vpt_{(\hat{\mu}(s),\hat{\mu}(t))}^{({\mu}(s),{\mu}(t))}\hat{\covarop}(s,t)-\covarop(s,t)\|_G}{\sup_{s,t\in\tdomain}\|\covarop(s,t)\|_G},\\
\text{rRMISE} & :=\frac{\{\expect\int_{\tdomain^2}\|\vpt_{(\hat{\mu}(s),\hat{\mu}(t))}^{({\mu}(s),{\mu}(t))}\hat{\covarop}(s,t)-\covarop(s,t)\|^{2}_G\diffop s\diffop t\}^{1/2}}{\{\int_{\tdomain^2}\|\covarop(s,t)\|^{2}_G\diffop s\diffop t\}^{1/2}}.
\end{aligned}
\end{equation}

The results, summarized in Tables \ref{tab:simu-rMUIE} and \ref{tab:simu-rRMISE}, show that the estimation errors in terms of both rMUIE and rRMISE {in percentage} decrease as $n$ or $m$ increases, and thus demonstrate the effectiveness of the proposed estimation method. A phase transition phenomenon is also observed: When $m$ is increased from 5 to 10 or 20, the errors in terms of both rMUIE and rRMISE decrease substantially, while when $m$ is further increased to 30, the decrease in errors is marginal. This phenomenon, hinted by our theoretical analysis in Section \ref{sec:theory}, suggests that for a fixed sample size, when $m=5$ or $m=10$ the errors are primarily due to the low sampling rate $m$, while when $m=30$ or higher the errors are mainly contributed by the sample size.

To numerically verify that the proposed framework is invariant to parameterization, we also computed the estimates with a different parameterization of the manifolds in the above. Specifically, we considered the following additional parameterization called stereographic projection
\begin{equation}
\label{equ:simu:stereographic}
\varphi: (u,v)\in\mathbb{R}^{2}\rightarrow \left(\frac{2u}{u^{2}+v^{2}+1},\frac{2v}{u^{2}+v^{2}+1},\frac{u^{2}+v^{2}-1}{u^{2}+v^{2}+1}\right)\in \mathbb{S}^2
\end{equation}	
for the sphere $\mathbb S^2$,  parameterizing the matrices generated in the setting of $\mathrm{Sym}_{LC}^+$ by their lower triangular parts instead of their Cholesky factors, and parameterizing the matrices in the setting of $\mathrm{Sym}_{AF}^+$ by their Cholesky factors instead of their lower triangular parts. In addition, to verify that the results are invariant to frames, for each setting, we consider two sets of randomly selected  frames for computation. We then found that identical results were obtained under different choices of parameterization and/or frames. This numerically demonstrates that  the proposed framework and method are invariant to parameterization and the choice of frames. In addition, the manifold $\mathrm{Sym}_{AF}^+$ does not have a canonical embedding. As a matter of fact, we did not employ an embedding for any of the above manifolds in our studies, demonstrating the intrinsicality of the proposed framework.

\begin{table}
	\renewcommand{\arraystretch}{1.2}
	\caption{rMUIE and its Monte Carlo standard errors under different settings in percentage (\%)\label{tab:simu-rMUIE}}
	\begin{centering}
		\begin{tabular}{|c|c|c|c|c|c|}
			\hline 
			\multirow{2}{*}{manifold}	& \multirow{2}{*}{$n$}   & \multicolumn{4}{c|}{rMUIE} \tabularnewline
			\cline{3-6}
			&  & $m=5$ & $m=10$ & $m=20$ & $m=30$ \tabularnewline
			\hline 
			\multirow{3}{*}{$\mathbb{S}^{2}$} 
			& $100$ & 35.40$\,$(17.50) & 27.60$\,$(12.43) & 18.97$\,$(7.53) & 17.69$\,$(10.91)\tabularnewline
			\cline{2-6}
			& $200$ & 26.36$\,$(12.26) & 20.72$\,$(13.79) & 14.94$\,$(6.07)& 13.85$\,$(4.75) \tabularnewline
			\cline{2-6}
			& $400$ & 18.04$\,$(8.78) & 12.47$\,$(4.41) & 10.48$\,$(2.52) & 8.30$\,$(4.19) \tabularnewline
			\hline 
			\multirow{3}{*}{$\mathrm{Sym}_{LC}^{+}$} 
			& $100$ & 41.58$\,$(13.24) & 36.70$\,$(37.70) & 25.44$\,$(8.42) & 22.10$\,$(5.56) \tabularnewline
			\cline{2-6}
			& $200$ & 30.36$\,$(10.41) & 22.05$\,$(6.51) & 20.89$\,$(7.14)& 15.51$\,$(3.52) \tabularnewline
			\cline{2-6}
			& $400$ & 24.15$\,$(12.30) & 14.55$\,$(5.13) & 12.47$\,$(4.85)& 12.09$\,$(2.46) \tabularnewline
			\hline 
			\multirow{3}{*}{$\mathrm{Sym}_{AF}^{+}$} 
			& $100$ & 35.40$\,$(17.50) & 27.60$\,$(12.43) & 18.97$\,$(7.53) & 18.77$\,$(7.24) \tabularnewline
			\cline{2-6}
			& $200$ & 26.35$\,$(12.26) & 20.72$\,$(13.79) & 14.94$\,$(6.05) & 13.85$\,$(4.75) \tabularnewline
			\cline{2-6}
			& $400$ & 18.04$\,$(8.78) & 12.49$\,$(4.40) & 10.48$\,$(2.52) & 8.30$\,$(4.19) \tabularnewline
			\hline 	
		\end{tabular}
		\par\end{centering}	
\end{table}
\begin{table}			
	\renewcommand{\arraystretch}{1.2}
	\caption{rRMISE and its Monte Carlo standard errors under different settings in percentage (\%)\label{tab:simu-rRMISE}}
	\begin{centering}
		\begin{tabular}{|c|c|c|c|c|c|}
			\hline 
			\multirow{2}{*}{manifold}	& \multirow{2}{*}{$n$}   & \multicolumn{4}{c|}{rRMISE} \tabularnewline
			\cline{3-6}
			&  & $m=5$ & $m=10$ & $m=20$ & $m=30$ \tabularnewline
			\hline 
			\multirow{3}{*}{$\mathbb{S}^{2}$} 
			& $100$ & 24.22$\,$(8.70) & 20.63$\,$(7.97) & 16.10$\,$(6.11) & 15.68$\,$(6.73)\tabularnewline
			\cline{2-6}
			& $200$ & 17.16$\,$(5.73) & 14.00$\,$(6.14) & 12.10$\,$(4.48) & 11.89$\,$(4.55)\tabularnewline
			\cline{2-6}
			& $400$ & 11.99$\,$(4.45) & 9.29$\,$(3.03) & 8.81$\,$(1.81) & 6.90$\,$(3.49) \tabularnewline
			\hline 
			\multirow{3}{*}{$\mathrm{Sym}_{LC}^{+}$} 
			& $100$ & 29.52$\,$(7.20) & 25.98$\,$(12.15) & 21.66$\,$(6.63) & 19.01$\,$(2.98) \tabularnewline
			\cline{2-6}
			& $200$ & 21.13$\,$(5.19) & 16.27$\,$(3.81) & 18.99$\,$(6.05) & 13.95$\,$(2.94) \tabularnewline
			\cline{2-6}
			& $400$ & 16.29$\,$(4.33) & 11.04$\,$(2.54) & 10.99$\,$(4.33) & 10.08$\,$(2.37)\tabularnewline
			\hline 
			\multirow{3}{*}{$\mathrm{Sym}_{AF}^{+}$} 
			& $100$ & 24.22$\,$(8.70) & 20.63$\,$(7.97) & 16.10$\,$(6.11) & 15.37$\,$(5.75) \tabularnewline
			\cline{2-6}
			& $200$ & 17.16$\,$(5.73) & 14.00$\,$(6.14) & 13.21$\,$(4.96) & 11.89$\,$(4.55) \tabularnewline
			\cline{2-6}
			& $400$ & 11.99$\,$(4.45) & 10.55$\,$(3.55) & 9.81$\,$(1.81) & 6.90$\,$(3.49) \tabularnewline
			\hline 	
		\end{tabular}
		\par\end{centering}	
\end{table}

\section[Application]{Application to longitudinal diffusion tensors}\label{sec:application}
We apply the proposed framework to analyze longitudinal diffusion tensors from Alzheimer's Disease Neuroimaging Initiative (ADNI) database. The ADNI was launched in 2003 as a public-private partnership, led by Principal Investigator Michael W. Weiner, MD. The primary goal of ADNI has been to test whether serial magnetic resonance imaging (MRI), positron emission tomography (PET), other biological markers, and clinical and neuropsychological assessment can be combined to measure the progression of mild cognitive impairment (MCI) and early Alzheimer's disease (AD). For up-to-date information, see \url{www.adni-info.org}. 

Diffusion tensor imaging (DTI), a special kind of diffusion-weighted magnetic resonance imaging, has  been extensively adopted in brain science to investigate white matter tractography. In a DTI image, each brain voxel is associated with a $3\times 3$ symmetric positive-definite matrix, called diffusion tensor, that characterizes diffusion of water molecules in the voxel. As diffusion of water molecules carries rich information about axons,  diffusion tensor imaging has important applications in both clinical diagnostics and scientific research related to brain diseases. From a statistical perspective, diffusion tensors are modeled as random elements in $\dtispd$, and have been studied extensively, such as \cite{Fillard2005,arsigny2006,Lenglet2006,Pennec2006,Fletcher2007,Dryden2009,Zhu2009,Pennec2020}, among many others. In these works $\dtispd$ is endowed with a Riemannian metric or a non-Euclidean distance that aims to alleviate or completely eliminate swelling effect \citep{Arsigny2007}. However, none of them consider the longitudinal aspect of diffusion tensors.

We focus on the hippocampus, a brain region that plays an important role in memory and is central to Alzheimer's disease \citep{Lindberg2012}, and include in the study subjects with at least four properly recorded DTI images. This results in a sample of $n=177$ subjects with age ranging from 55.2 to 93.5. Among them, 42 subjects are cognitively normal (CN), while the others (AD) developed one of early mild cognitive impairment, mild cognitive impairment, late mild cognitive impairment and Alzheimer's disease. On average, there are $m=5.5$ DTI scans for each subject, which shows that the data are rather sparsely recorded. A standard procedure that includes denoising, eddy current and motion correction, skull stripping, bias correction and normalization is adopted to preprocess the raw images. Based on the preprocessed DTI images, diffusion tensors are derived. We endow $\dtispd$ with the Log-Cholesky metric \citep{Lin2019Riemannian} and turn it into a Riemannian manifold of nonpositive sectional curvature. Under the Log-Cholesky framework that avoids swelling effect and meanwhile enjoys computational efficiency, the Fr\'echet mean of the tensors inside hippocampus is calculated for each DTI scan, which represents a coarse-grain summary of hippocampal  diffusion tensors. As we shall see below, this averaged mean tensor is already capable of illuminating some differences of the diffusion dynamics  between the AD and CN groups.    

The estimated Fr\'echet mean trajectories are depicted in Figure \ref{fig:dti-mean} with the bandwith 4.2 for the AD group and 5.7 for the CN group, where each tensor is visualized as an ellipsoid whose volume corresponds to the determinant of the tensor. They suggest that, overall the averaged hippocampal diffusion tensor remains rather stable for the CN group; the tensors at age 55.2 and 93.5 that markedly depart from the others could be due to boundary effect, i.e., there are relatively less data around the two boundary time points. In contrast, for the AD group,  the dynamic tensor varies more substantially, and  the diffusion (measured by the determinant of tensors and indicated by volume of ellipsoids) seems larger. Also, the mean trajectory of the AD group exhibits slightly lower fractional anisotropy at each time point. Fractional anisotropy, defined for each $3\times 3$ symmetric positive-definite matrix $A$ by \begin{equation*}\text{FA}=\sqrt{\frac{3}{2}\frac{(\rho_1-\bar \rho)^2+(\rho_2-\bar\rho)^2+(\rho_3-\bar\rho)^2}{\rho_1^2+\rho_2^2+\rho_3^2}}\end{equation*} where $\rho_1,\rho_2,\rho_3$ are eigenvalues of $A$ and $\bar\rho=(\rho_1+\rho_2+\rho_3)/3$, describes the degree of anisotropy of diffusion of water molecules. It is close to zero unless movement of the water molecules is constrained by structures such as white matter fibers. The below-normal fractional anisotropy might  suggest some damage on the hippocampal structure for the AD group.

For the covariance function, Figure \ref{fig:dti-pc} shows the first three intrinsic Riemannian functional principal components that are mapped on $\dtispd$ via the Riemannian exponential maps $\Exp_{\hat\mu(t)}$, where the bandwidth is 3.5 for the AD group and 4.5 for the CN group. They respectively account for 40.2\%,  22.2\% and 7.0\% of variance for the AD group, and 40.7\%, 19.4\% and 8.0\% of variance for the CN group. These components, compared side by side in Figure \ref{fig:dti-pc}, exhibit different patterns  between the two cohorts. For instance, the Riemannian functional principal components of the AD group show relatively larger diffusion and more dynamics over time. In addition, they exhibit relatively lower fractional anisotropy, which suggests that individual diffusion tensor trajectories in the AD group tend to deviate from their mean trajectory along the direction with below-normal fractional anisotropy.

We conclude this section by the following remarks. Note that,  
in the above analysis, the averaged hippocampal diffusion tensors do not capture the rich spatial information of all tensors within the hippocampus. To account for such information, all hippocampal diffusion tensors shall be taken into consideration by being modeled as an $\dtispd$-valued function defined on the hippocampal region which is a three-dimensional domain of $\real^3$. Along with the temporal dynamics, for each subject there are spatiotemporal Riemannian manifold-valued data, with the sparseness along the temporal direction.  Our framework can be extended to analyze such data, but the extension requires substantial development and is left for future study. 

In addition, each of the sparse trajectories is only observed in an individual-specific period shorter than the span ($93.6-55.2=38.4$ years) of the entire study. Functional data of this feature, called functional fragments  \citep{Delaigle2019,Descary2019recovering} or functional snippets \citep{Lin2021}, require special treatment on estimating the covariance structure. Particularly, local smoothing techniques can only estimate the diagonal region of the covariance function for such data and thus require the additional assumption that the covariance function is supported in the diagonal region, as we have  done implicitly in the above analysis. Extension of the estimation method proposed in this paper to functional fragments/snippets is nontrivial and thus also left for future study.

\begin{figure}[t]  
	\centering
	\begin{minipage}[c]{\linewidth}
		\centering
		\begin{tikzpicture}[scale=0.8, every node/.style={scale=0.8}]
		\newcommand\x{0.8}
		\newcommand\del{1.65}
		\newcommand\xstart{-5.8}
		
		\node at (0+\x,0) {\includegraphics[width=0.9\textwidth]{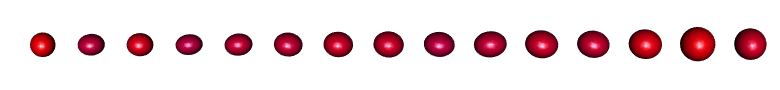}};
		\node at (0+\x,-1.3) {\includegraphics[width=0.9\textwidth]{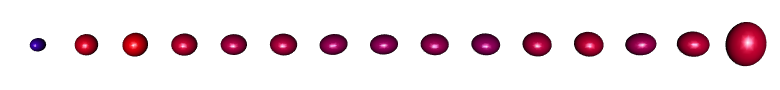}};
		
		\node at (-7.5,-1) {\includegraphics[scale=0.5]{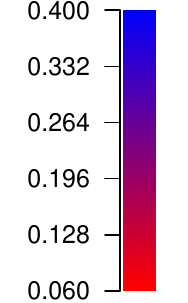}};
		
		\draw[-latex] (-6.5+\x,-2.3) -- (7.5,-2.3);
		
		\draw (\xstart+\x,-2.1) -- (\xstart+\x,-2.3);
		\node at (\xstart+\x,-2.6) {55.2};
		
		\draw (\xstart+\del+\x,-2.1) -- (\xstart+\del+\x,-2.3);
		\node at (\xstart+\del+\x,-2.6) {60.7};
		
		\draw (\xstart+2*\del+\x,-2.1) -- (\xstart+2*\del+\x,-2.3);
		\node at (\xstart+2*\del+\x,-2.6) {66.2};
		
		\draw (\xstart+3*\del+\x,-2.1) -- (\xstart+3*\del+\x,-2.3);
		\node at (\xstart+3*\del+\x,-2.6) {71.6};
		
		\draw (\xstart+4*\del+\x,-2.1) -- (\xstart+4*\del+\x,-2.3);
		\node at (\xstart+4*\del+\x,-2.6) {77.1};
		
		\draw (\xstart+5*\del+\x,-2.1) -- (\xstart+5*\del+\x,-2.3);
		\node at (\xstart+5*\del+\x,-2.6) {82.6};
		
		\draw (\xstart+6.05*\del+\x,-2.1) -- (\xstart+6.05*\del+\x,-2.3);
		\node at (\xstart+6.05*\del+\x,-2.6) {88.1};
		
		\draw (\xstart+7.1*\del+\x,-2.1) -- (\xstart+7.1*\del+\x,-2.3);
		\node at (\xstart+7.1*\del+\x,-2.6) {93.5};
		
		\node at (0+\x,-3.1) {age};
		
		\end{tikzpicture} 
	\end{minipage}
	\caption{Mean functions. Top: AD group; bottom: CN group. The color encodes fractional anisotropy.\label{fig:dti-mean}}
\end{figure}

\begin{figure}[t]  
	\centering
	\begin{minipage}[c]{\linewidth}
		\centering
		\begin{tikzpicture}[scale=0.8, every node/.style={scale=0.8}]
		\newcommand\x{0.8}
		\newcommand\del{1.65}
		\newcommand\xstart{-5.8}
		
		\node at (0+\x,0) {\includegraphics[width=0.9\textwidth]{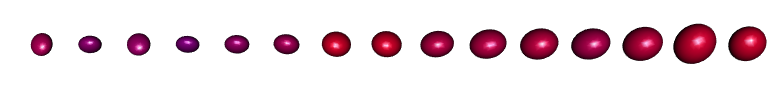}};
		\node at (0+\x,-1.3) {\includegraphics[width=0.9\textwidth]{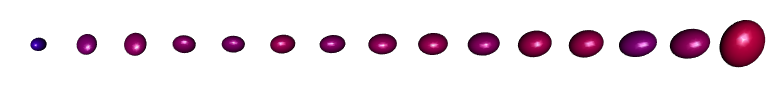}};
		
		\node at (0+\x,-3.3) {\includegraphics[width=0.9\textwidth]{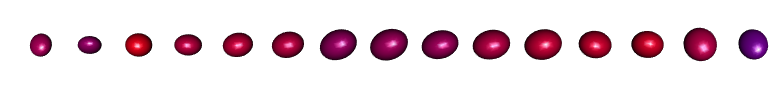}};
		\node at (0+\x,-4.6) {\includegraphics[width=0.9\textwidth]{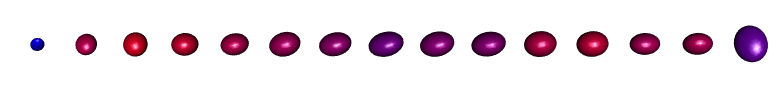}};
		
		\node at (0+\x,-6.6) {\includegraphics[width=0.9\textwidth]{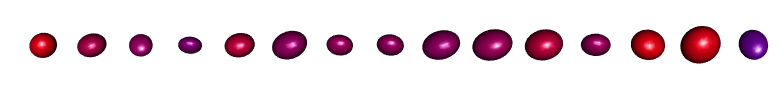}};
		\node at (0+\x,-7.9) {\includegraphics[width=0.9\textwidth]{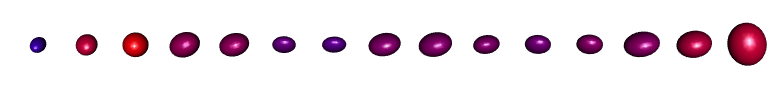}};
		
		\node at (-7.6,-4.2) {\includegraphics[scale=0.5]{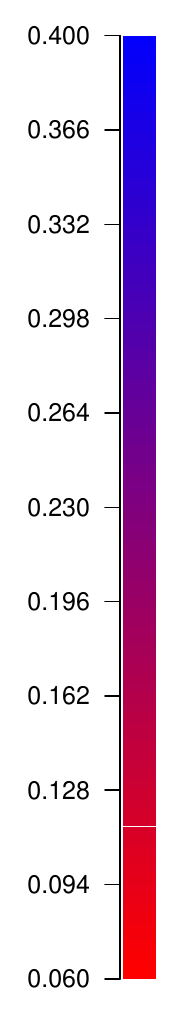}};
		
		\draw[-latex] (-6.5+\x,-8.8) -- (7.5,-8.8);

		\draw (\xstart+\x,-8.6) -- (\xstart+\x,-8.8);
		\node at (\xstart+\x,-9.1) {55.2};
		
		\draw (\xstart+\del+\x,-8.6) -- (\xstart+\del+\x,-8.8);
		\node at (\xstart+\del+\x,-9.1) {60.7};
		
		\draw (\xstart+2*\del+\x,-8.6) -- (\xstart+2*\del+\x,-8.8);
		\node at (\xstart+2*\del+\x,-9.1) {66.2};
		
		\draw (\xstart+3*\del+\x,-8.6) -- (\xstart+3*\del+\x,-8.8);
		\node at (\xstart+3*\del+\x,-9.1) {71.6};
		
		\draw (\xstart+4*\del+\x,-8.6) -- (\xstart+4*\del+\x,-8.8);
		\node at (\xstart+4*\del+\x,-9.1) {77.1};
		
		\draw (\xstart+5*\del+\x,-8.6) -- (\xstart+5*\del+\x,-8.8);
		\node at (\xstart+5*\del+\x,-9.1) {82.6};
		
		\draw (\xstart+6.05*\del+\x,-8.6) -- (\xstart+6.05*\del+\x,-8.8);
		\node at (\xstart+6.05*\del+\x,-9.1) {88.1};
		
		\draw (\xstart+7.1*\del+\x,-8.6) -- (\xstart+7.1*\del+\x,-8.8);
		\node at (\xstart+7.1*\del+\x,-9.1) {93.5};
		
		\node at (0+\x,-9.7) {age};
		
		\end{tikzpicture} 
	\end{minipage}
	\caption{The first principal component of AD group (Row 1) and CN group (Row 2), the second principal component of AD group (Row 3) and CN group (Row 4), and the third principal component of AD group (Row 5) and CN group (Row 6).  The color encodes fractional anisotropy.\label{fig:dti-pc}}
\end{figure}

\section*{Acknowledgements}
Lingxuan Shao was a visiting student of Zhenhua Lin in National University of Singapore at the time of developing the paper. Lingxuan Shao and Zhenhua Lin are the joint first authors, and Fang Yao is the corresponding author. Zhenhua Lin's research is partially supported by NUS startup grant R-155-000-217-133. Fang Yao's research is partially supported by National Natural Science Foundation of China Grants 11931001 and 11871080, the National Key R\&D Program of China Grant 2020YFE0204200, the LMAM, and the Key Laboratory of Mathematical Economics and Quantitative Finance (Peking University), Ministry of Education. 
Data collection and sharing for this project was funded by the Alzheimer's Disease Neuroimaging Initiative (ADNI) (National Institutes of Health Grant U01 AG024904) and DOD ADNI (Department of Defense award number W81XWH-12-2-0012). ADNI is funded by the National Institute on Aging, the National Institute of Biomedical Imaging and Bioengineering, and through generous contributions from the following: AbbVie, Alzheimer's Association; Alzheimer's Drug Discovery Foundation; Araclon Biotech; BioClinica, Inc.; Biogen; Bristol-Myers Squibb Company; CereSpir, Inc.; Cogstate; Eisai Inc.; Elan Pharmaceuticals, Inc.; Eli Lilly and Company; EuroImmun; F. Hoffmann-La Roche Ltd and its affiliated company Genentech, Inc.; Fujirebio; GE Healthcare; IXICO Ltd.; Janssen Alzheimer Immunotherapy Research \& Development, LLC.; Johnson \& Johnson Pharmaceutical Research \& Development LLC.; Lumosity; Lundbeck; Merck \& Co., Inc.; Meso Scale Diagnostics, LLC.; NeuroRx Research; Neurotrack Technologies; Novartis Pharmaceuticals Corporation; Pfizer Inc.; Piramal Imaging; Servier; Takeda Pharmaceutical Company; and Transition Therapeutics. The Canadian Institutes of Health Research is providing funds to support ADNI clinical sites in Canada. Private sector contributions are facilitated by the Foundation for the National Institutes of Health (\url{www.fnih.org}). The grantee organization is the Northern California Institute for Research and Education, and the study is coordinated by the Alzheimer's Therapeutic Research Institute at the University of Southern California. ADNI data are disseminated by the Laboratory for Neuro Imaging at the University of Southern California.

%
 

\begin{center}
	SUPPLEMENTARY MATERIAL
\end{center}

The supplementary material contains some preliminaries for Riemannian geometry, the asymptotic distribution of the proposed covariance estimator, proofs, theoretical results for the regular design, and further illustrations of the invariance property. The code and data are hosted at \url{https://github.com/linulysses/iRFDA-sparse}.


%
%


\bibliographystyle{imsart-nameyear} 
\bibliography{ref}       


\includepdf[page=-]{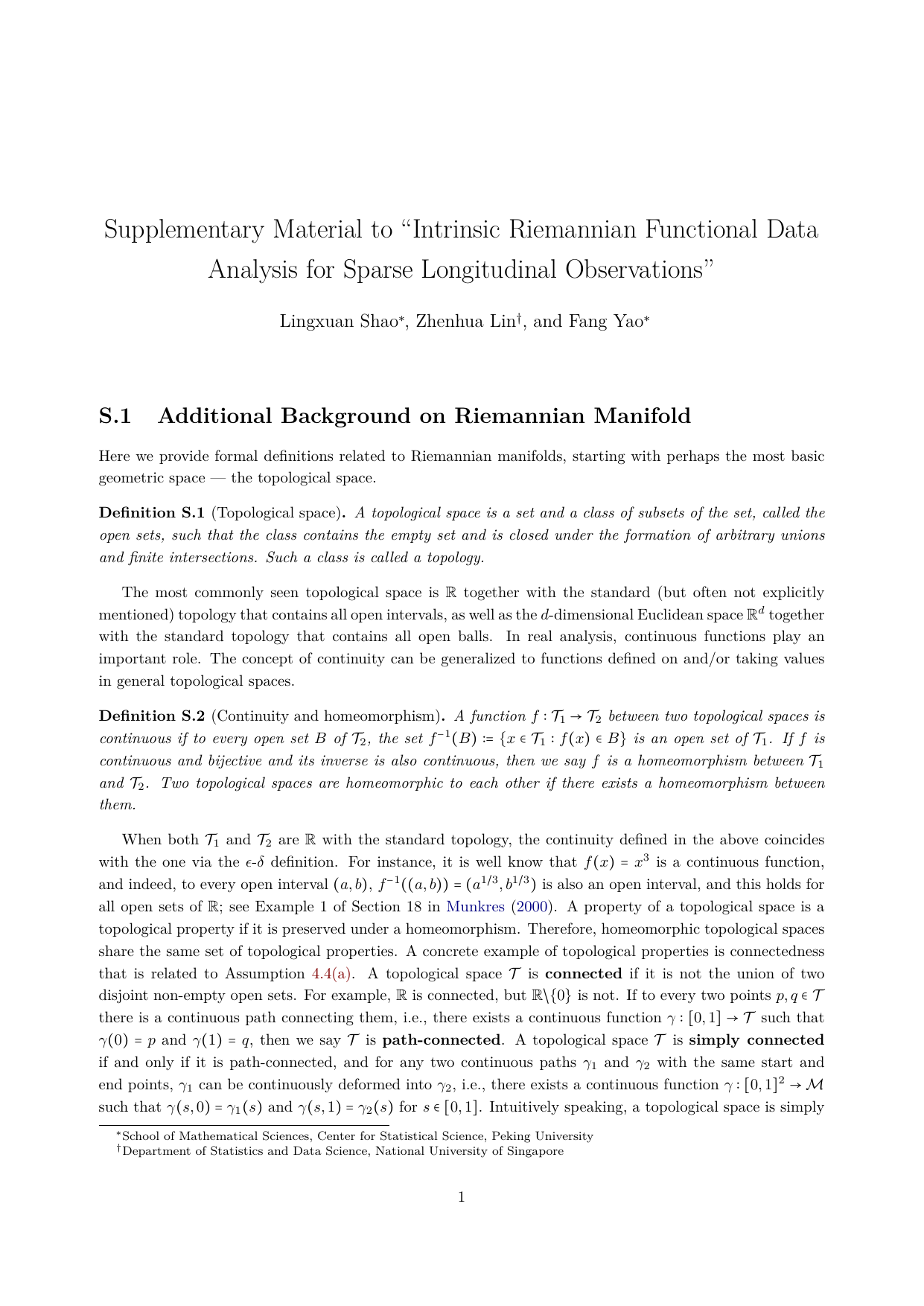}

\end{document}